\documentclass{article} % For LaTeX2e
\usepackage{iclr2027_conference,times}

\usepackage{amsmath,amsfonts,bm}

\def\eqref#1{equation~\ref{#1}}
\def\1{\bm{1}}

\def\rva{{\mathbf{a}}}

\def\rvu{{\mathbf{i}}}

\def\rvm{{\mathbf{m}}}

\def\rvs{{\mathbf{s}}}

\def\rvu{{\mathbf{u}}}

\def\rvw{{\mathbf{w}}}

\DeclareMathAlphabet{\mathsfit}{\encodingdefault}{\sfdefault}{m}{sl}
\SetMathAlphabet{\mathsfit}{bold}{\encodingdefault}{\sfdefault}{bx}{n}

\DeclareMathOperator*{\argmax}{arg\,max}

\usepackage{hyperref}
\usepackage{url}

\usepackage{algorithm}
\usepackage{amsmath,amssymb,amsfonts}
\usepackage{amsthm}
\usepackage{algorithm}
\usepackage{algorithmic}

\usepackage{graphicx}
\usepackage{textcomp}
\usepackage{xcolor}
\usepackage{booktabs}
\usepackage{multirow}
\usepackage{subcaption}
\usepackage{wrapfig}

\newcommand{\set}{\mathcal}

\newtheorem{theorem}{\textbf{Theorem}}

\newtheorem{assumption}{\textbf{Assumption}}
\newtheorem{lemma}{\textbf{Lemma}}

\title{Embedding Information in Policy Randomness: Distribution-Preserving Implicit \\Communication in MDPs}

\author{Gongpu Chen, Deniz Gündüz \\
Department of Electrical and Electronic Engineering\\
Imperial College London,\\
London SW7 2AZ, U.K. \\
\texttt{\{gongpu.chen,d.gunduz\}@imperial.ac.uk} \\
}

\iclrfinalcopy % Uncomment for camera-ready version, but NOT for submission.
\begin{document}

\maketitle

\begin{abstract}
This paper investigates a new role of randomness in stochastic policies---as a medium for embedding and communicating information. Specifically, we consider an agent interacting with an environment according to a prescribed stochastic policy. By carefully modulating the random action realizations of the policy, information can be embedded into the agent's behavior without altering the prescribed action distribution. We cast this problem as a channel coding problem, in which the controlled system implicitly serves as a communication channel. We then propose a coding scheme, termed posterior-matching randomness modulation (PRAM), that selects actions according to the message to be transmitted. With the aid of common randomness independent of the control system, PRAM preserves the prescribed action distribution for every state and every message. We further prove that PRAM is asymptotically reliable: the message can be decoded with arbitrarily small error probability as the observed state trajectory becomes sufficiently long. Experiments demonstrate the effectiveness of the proposed framework in applications including multi-bit policy watermarking and trajectory provenance.
\end{abstract}

\section{Introduction}
Stochastic policies are widely used in control systems and reinforcement learning (RL), where policy randomness facilitates exploration~\citep{ahmed2019understanding}, enhance robustness~\citep{eysenbach2021maximum}, and supports multimodal decision making~\citep{zhao2019maximum}. Under a stochastic policy, actions are sampled from a state-dependent distribution rather than selected deterministically. This paper reveals a new role of policy randomness: it can be exploited to embed information into an agent's behavior without degrading its control performance. Specifically, we consider a Markov decision process (MDP) in which an agent interacts with the environment according to a prescribed stochastic policy~\citep{puterman2014markov}. Our goal is to enable the agent to embed and communicate a message through its action realizations while preserving the prescribed action distribution at every state. In other words, the embedded information should be recoverable from the resulting state trajectory without changing the marginal behavior of the reference policy. We refer to this technique as \textit{distribution-preserving implicit communication (DPIC)}.

One important application of this technique is multi-bit policy watermarking. RL has been increasingly deployed in practical decision-making systems, ranging from robotics and autonomous systems to energy management, inventory control, and cyber-defence~\citep{zifan2025reinforcement,yu2021review,nguyen2021deep}. In many such applications, the resulting policies are trained using valuable data and substantial computational and engineering resources, making them important intellectual property (IP) of the development team. Protecting the IP of these RL policies therefore requires mechanisms that can verify policy ownership and detect unauthorized use or replication. Motivated by this need, recent studies \citep{amanda2026remotely,chen2021temporal,behzadan2019sequential,huang2025agent} have investigated digital watermarking methods for policy IP protection. However, existing methods primarily support zero-bit watermarking, where the detector only determines whether a watermark is present. In contrast, the DPIC framework naturally enables multi-bit watermarking, allowing policy owners to embed richer information for identification and attribution.
The same information-embedding mechanism can also support closely related tasks such as trajectory provenance~\citep{gao2026trace} and more general behavioral metadata embedding~\citep{an2026sequential} for source attribution, authentication, and traceability.

As suggested by its name, DPIC casts information embedding as a communication problem, and more specifically, a channel coding problem. Inspired by implicit communication in control systems~\citep{chen2025implicit,Act2Comm}, we view the MDP environment as a state-dependent communication channel, where the agent's action serves as the channel input and the resulting state serves as the channel output. Under this interpretation, embedding a $B$-bit message into the agent's behavior is equivalent to transmitting the message through this channel. 

A key objective of DPIC is to enable such communication without degrading the original control performance. This motivates a distribution-preserving requirement on the agent's actions. Unlike conventional communication, where the input distribution is typically constrained only after averaging over the message distribution~\citep{cover1991elements}, applications such as policy watermarking may repeatedly embed the same fixed message. Message-averaged preservation is therefore insufficient to guarantee unchanged control performance. We instead require the action distribution to match the prescribed stochastic policy for every state and every message. 
This requirement poses a fundamental challenge. From an information-theoretic perspective, preserving the same action distribution for every message implies that, conditioned on the current state, the action alone is independent of the message and therefore cannot directly convey information about it. To overcome this limitation, we introduce common randomness shared between the agent (encoder) and the observer (decoder), and select each action based on the message, the observed state history, and the shared random variable. Communication is then enabled through the message-dependent coupling between the action and the common randomness, while the marginal action distribution remains unchanged for every message.

Building on this idea, we develop the \textit{posterior-matching randomness modulation (PRAM)} scheme. The term \textit{randomness modulation} refers to encoding information by controlling the realization, rather than the distribution, of policy randomness. We show that PRAM preserves the prescribed action distribution for every state and every message, while still enabling reliable information transmission through the MDP dynamics. Under mild conditions, the posterior probability of the transmitted message converges to one as the trajectory length increases, ensuring asymptotically reliable message recovery. We further demonstrate that PRAM remains effective under model mismatch, where the encoder and decoder rely on an estimated transition kernel rather than the true system dynamics, making the method particularly relevant to learning-based systems.

Our main contributions are summarized as follows. \textbf{(i)} We introduce the idea of embedding information in policy randomness and formulate it as DPIC. \textbf{(ii)} We develop a posterior-matching randomness modulation scheme based on common randomness and prove its per-message distribution-preserving property. \textbf{(iii)} We establish the asymptotic reliability of PRAM. \textbf{(iv)} We demonstrate the effectiveness of  PRAM in multi-bit policy watermarking and trajectory provenance, including under model mismatch.

\section{Related Work}
%Our work connects to several lines of research spanning both applications and methodology. On the application side, watermarking is a direct use case of DPIC and has attracted growing research attention in recent years. On the methodological side, implicit communication through actions and feedback coding provide the communication-theoretic foundation for our formulation and algorithm.

\paragraph{Policy watermarking.} Existing policy-watermarking methods mainly focus on ownership verification through behavioral signatures. Early work embeds watermarks through designated state-action patterns~\citep{behzadan2019sequential}, while temporal watermarking exploits sequential behaviors for more robust verification~\citep{chen2021temporal}. More recent approaches extend behavioral watermarking to general agents~\citep{huang2025agent} and remotely observable robotic policies~\citep{amanda2026remotely}. These methods are primarily designed for zero-bit watermark detection, where the detector can only distinguish between the presence and absence of a watermark. In contrast, multi-bit policy watermarking remains largely unexplored. A closely related work on multi-bit behavioral watermarking for LLM-based agents is TRACE~\citep{gao2026trace}, which also exploits controlled action sampling to embed information. However, its detector relies on direct observation of the executed actions, which is generally unavailable in many MDP and control settings. We therefore consider a more realistic setting in which the observer only observes the resulting state trajectory, and the embedded information must be conveyed through the stochastic system dynamics.

\paragraph{Sampling-based watermarking for generative models.} Compared with policy watermarking, multi-bit watermarking for generative models, particularly LLMs, has received substantially more attention. Existing methods embed information by modifying or controlling the token-sampling process while attempting to preserve text quality and statistical properties of the original model~\citep{kirchenbauer2023watermark,yoo2024advancing,wang2024towards,park2026watermod,zhu2026quantilemark}. However, by the nature of autoregressive LLM generation, these methods assume that the generated tokens—the actions under a sequential decision interpretation—are directly observable to the detector. As in~\citep{gao2026trace}, this leads to a fundamentally different decoding setting from ours and prevents these methods from being applied to general policy watermarking problems.

\paragraph{Implicit communication and feedback coding.} The idea of DPIC is closely related to communicating via MDPs~\citep{sokota2022communicating,Karabay2019} and implicit communication in control systems~\citep{chen2025implicit,chen2026ICoCo,rips2025communication}. These works view the controlled system as a communication channel and mainly characterize the tradeoff between communication rate and control performance across different policies.
The most relevant work is Act2Comm~\citep{Act2Comm}, which studies communication through actions in finite-state MDPs, a setting similar to ours. 
Essentially, Act2Comm addresses the problem of channel coding over a finite-state channel with feedback and, like conventional channel coding, constrains the input distribution only after averaging over the message distribution. In contrast, DPIC imposes a stronger requirement: the prescribed action distribution must be preserved for every state and every message. This distinction is particularly important for applications such as policy watermarking, where the same identifier may be embedded repeatedly, and makes the DPIC problem fundamentally more restrictive.

Nevertheless, our method is inspired by feedback channel coding~\citep{Gallager-SKscheme,tatikonda2008capacity,permuter2009finite}, particularly posterior matching~\citep{shayevitz2011optimal} and its variant \citep{li2015efficient}. With feedback, the encoder observes the same channel outputs as the decoder and can therefore track the decoder's posterior belief over the message. Posterior matching exploits this shared posterior to adaptively determine what information should be conveyed next. Building on this principle, we introduce common randomness shared by the encoder and decoder and develop a posterior-matching-based scheme tailored to DPIC. The common randomness enables message-dependent action selection while preserving the prescribed action distribution for every message, which is the key distinction from conventional schemes.

\paragraph{Notations.} We use $s^t$ to denote the sequence $(s_1,s_2,\ldots,s_t)$. When it is necessary to distinguish a random variable from its realization, we use boldface notation, e.g., $\mathbf{m}$, for the random variable and the corresponding regular font, e.g., $m$, for its realization. 
%For any set $\set{A}$, let $\Delta(\set{A})$ denote the set of probability distributions over $\set{A}$.

\section{Problem Statement}
\subsection{System Model}

\begin{wrapfigure}{r}{0.5\textwidth}
	\centering
	\includegraphics[width=0.8\linewidth]{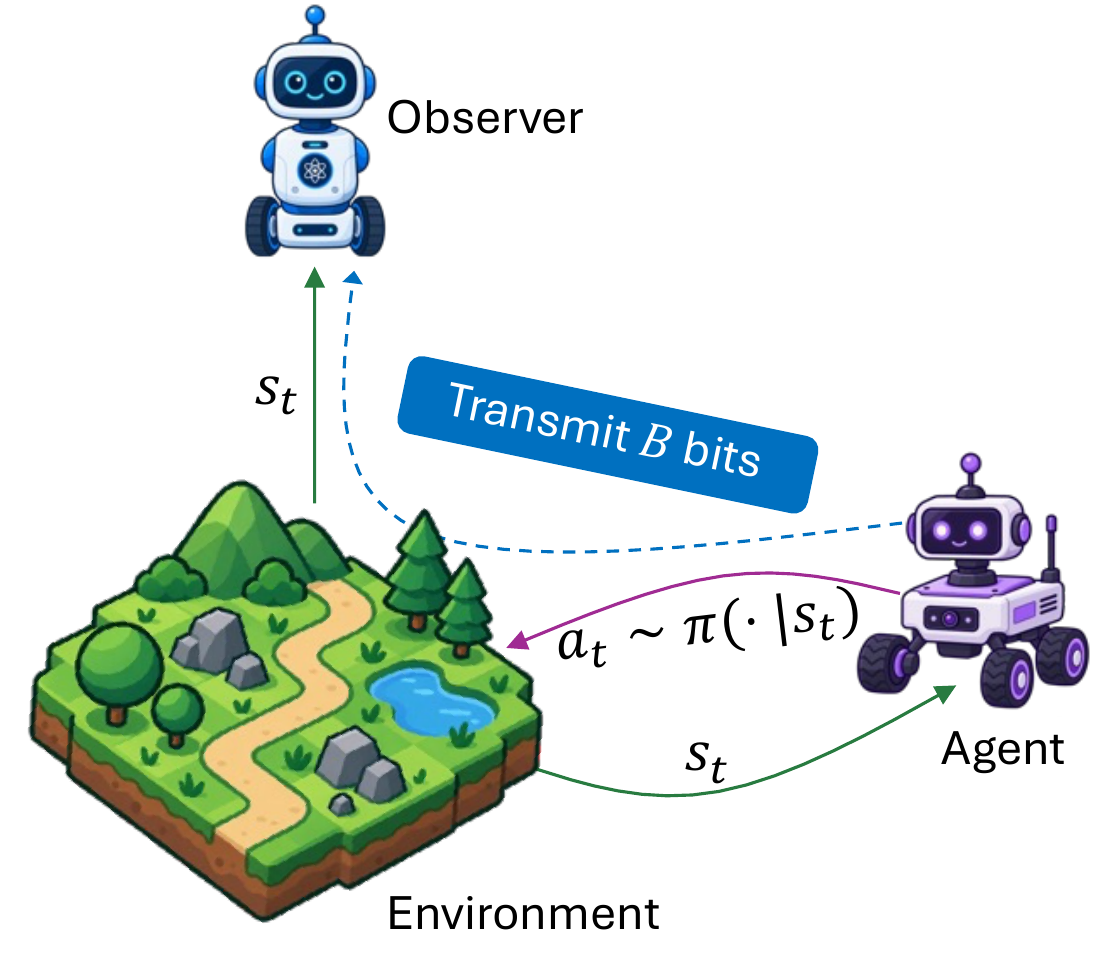}
	\caption{System model.}
	\label{fig:system}
\end{wrapfigure}
Consider a system in which an \textit{agent} interacts with its environment by taking actions. We model the system as an MDP defined by a tuple $(\mathcal{S},\mathcal{A},r,T)$, where $\set{S}$ is a finite state space, $\set{A}$ is a finite action space, $T:\set{S}\times \set{A}\times \set{S}\to [0,1]$ is the transition kernel, and $r:\set{S}\times \set{A}\to \mathbb{R}$ is the reward function. For simplicity, we index the actions by letting $\set{A}=\{1,2,\cdots,|\set{A}|\}$. At each time $t$, the agent observes the system state $s_t$ and selects an action $a_t$. The environment then transitions to a new state $s_{t+1}$ according to the distribution $T(\cdot|s_t,a_t)$. We further assume the presence of a \textit{passive observer} that can observe the state $s_t$ but cannot take actions or otherwise influence the environment. The system model is illustrated in Fig.~\ref{fig:system}.

A stationary control policy $\pi:\set{S}\to  \Delta(\set{A})$ is a mapping that specifies an action distribution for each state, where $\Delta(\set{A})$ denote the probability simplex over $\set{A}$. We use $\pi(a|s)$ to denote the probability that policy $\pi$ selects action $a\in \set{A}$ in state $s\in \set{S}$.
Unlike standard MDP or RL problems, our objective is not to identify an optimal policy that maximizes the cumulative or average reward. Instead, we assume that the agent is already equipped with a suitable stationary policy $\pi$, referred to as the \textit{reference policy}. Our central question is: \textit{Can the agent encode a message in its actions while following $\pi$, such that the observer can accurately decode the message from the resulting sequence of observed states?}

In particular, let $\rvm$ denote a message drawn from a finite message set $\set{M}=\{0,1,\cdots,2^B-1\}$ according to a distribution $P_{\rvm}$, where $B$ denotes the message length in bits. Conditioned on $\rvm$, the agent selects an action sequence $a_{1:n}$, with each action $a_t$ determined using the message $\rvm$ and the state history up to the current time step $t$. After observing the resulting state sequence $s_{1:n+1}$, the observer produces an estimate $\hat{\rvm}$ of the message. The estimation error probability, also referred to as the decoding error probability, is then defined as $P^{(n)}_e \triangleq P(\hat{\rvm} \neq \rvm)$.
 
Our objective is twofold: \textbf{(i)} to minimize $P_e^{(n)}$ for a fixed horizon $n$, and \textbf{(ii)} to ensure, for every realization of $\rvm$, that the agent’s action at each time step remains distributed according to the reference policy, i.e., $\rva_t\sim \pi(\cdot|\rvs_t), \forall t$. This model has applications in a range of practical settings, including multi-bit policy watermarking and trajectory provenance, as will be demonstrated in greater detail in the experimental section.

 \subsection{Problem Formulation}
 In essence, our task is to enable the agent to transmit a message to the observer without relying on an external communication channel, instead using the MDP environment itself as the channel. This problem can be studied through the lens of implicit communication \citep{chen2025implicit}. 
 Specifically, the MDP environment can be interpreted as a finite-state channel (FSC) between the agent (encoder) and the observer (decoder): at each time step $t$, the action $a_t$ serves as the channel input, and the current state $s_t$ serves as the channel state. Upon execution of $a_t$, the environment transitions to a new state $s_{t+1}$, which constitutes the corresponding channel output. Under this interpretation, the reference policy $\pi$ specifies the desired input distribution of the FSC: conditional on the current channel state $s_t$, the channel input should be distributed according to $\pi(\cdot| s_t)$. 
 
 Within the framework of implicit communication, the problem reduces to designing a channel code for the FSC induced by the MDP. In channel coding, such a code consists of an encoder and a decoder. The encoder is a sequence of mappings $\set{E}=\{\set{E}_t:1\leq t\leq n\}$, where $	\set{E}_t:\set{M}\times\set{S}^t\times\set{A}^{t-1}\to\set{A}$ determines the action $a_t$ from the message, the state history, and the past actions.
 After observing the initial channel state $s_1$ and the subsequent $n$ channel outputs $s_{2:n+1}$, the observer applies the decoder mapping $\set{D}:\set{S}^{n+1}\to\set{M}$ to estimate the transmitted message. Here, $n$ denotes the \textit{blocklength}, corresponding to the number of time steps used to transmit the message.
 
 Conventional channel coding requires the channel input to follow a prescribed distribution, such as a capacity-achieving distribution, only after averaging over all messages. Our setting, however, imposes a stronger constraint: for every realization $m$ of $\rvm$, the channel input at time $t$ must follow the prescribed distribution $\pi(\cdot| s_t)$. At first glance, this requirement appears to preclude communication. Indeed, if $P(a_t|s_t,m)=\pi(a_t|s_t)$ for every $m$, then $\rva_t$ and $\rvm$ are conditionally independent given $s_t$, and hence the conditional mutual information $I(\rva_t;\rvm| s_t)=0$. As a result, input $\rva_t$ conveys no additional information about the message beyond that already contained in the current state.
 
 To address this challenge, we introduce common randomness shared between the agent and the observer. Specifically, let $\{\rvu_t\in \set{U}:t\ge 1\}$ be a sequence of independent and identically distributed (i.i.d.) random variables, with each $\rvu_t$ drawn from a distribution $P_{\rvu}$. At each time $t$, the realization of $\rvu_t$ is revealed to both the agent and the observer. We assume that $\rvu_t$ is independent of the message $\rvm$, the state history $\rvs^{t}$, and the action history $\rva^{t-1}$. In practice, such common randomness can be generated using a shared random seed derived from a secret key.
 
With the introduction of common randomness, we redefine the encoder as
 \begin{align*}
 	\set{E}_t:\set{M}\times\set{S}^t\times\set{A}^{t-1}\times \set{U}^t \to\set{A}, \ t\ge 1.
 \end{align*}
 Thus, at time $t$, the agent uses the shared random variables $\rvu^t$, together with the message and the available state and action histories, to determine the action $\rva_t$. Correspondingly, the decoder is defined as $\set{D}: \set{S}^{n+1}\times \set{U}^n \to\set{M}$.
The common randomness makes communication possible because the encoder can satisfy $I(\rva_t;\rvm|s_t)=0$ while maintaining $I(\rva_t;\rvm|s_t,u_t)>0$. In other words, although the action reveals no information about the message when conditioned only on the current state, it can depend strongly on the message once the shared randomness is known. Our objective is to design an encoder–decoder pair $(\set{E},\set{D})$ that enables efficient message transmission while ensuring that the agent adheres to the reference policy $\pi$.

\begin{figure}[t]
	\centering
	\includegraphics[width=0.8\linewidth]{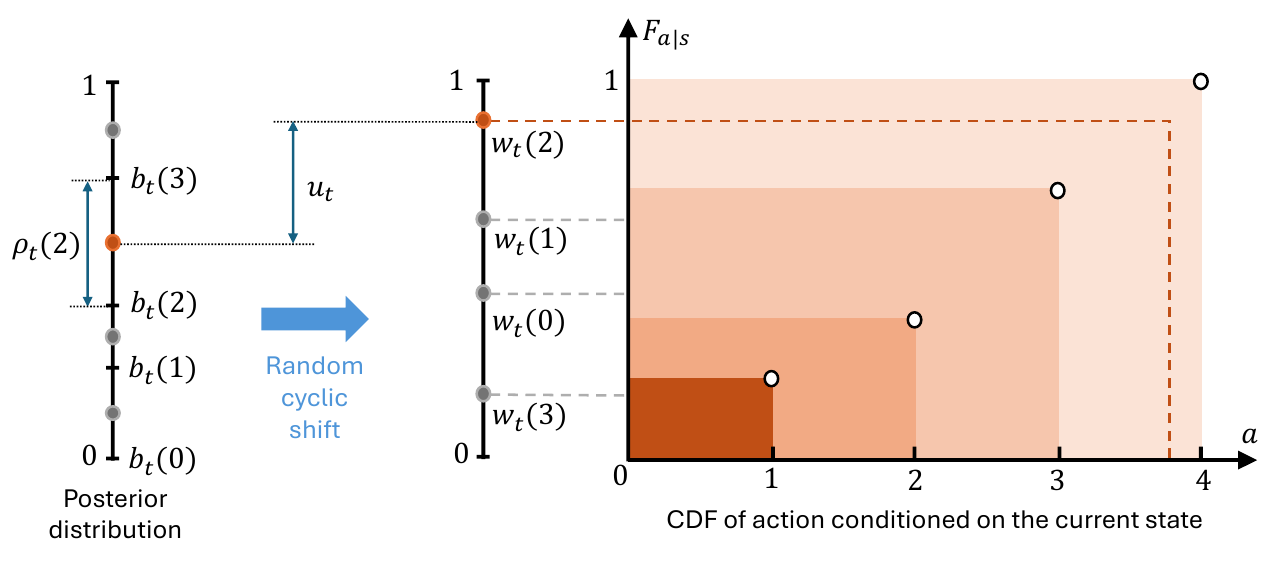}
	\caption{An illustration of the coding algorithm. Suppose $|\set{M}|=4$ and the message is $m^*=2$, then $a_t(m^*) = 4$ in this example.}
	\label{fig:algo}
\end{figure}

\section{Distribution-preserving Implicit Communication} \label{sec:PRAM}
In this section, we introduce a coding scheme that preserves the conditional action distribution $\rva_t\sim\pi(\cdot|\rvs_t)$ for every message $m$. We also provide a theoretical guarantee that, under this scheme, the observer’s decoding error probability vanishes as the blocklength $n$ tends to infinity.

\subsection{Coding Scheme}
Our method is largely inspired by posterior matching \citep{shayevitz2011optimal} and its variant \citep{li2015efficient}, both of which are coding schemes for feedback communication. However, these schemes preserve the desired input distribution only after averaging over all messages, rather than for each message individually. To overcome this limitation, we introduce the common randomness $\rvu_t$ and develop a novel coding scheme that leverages it to preserve the prescribed input distribution $\pi(\cdot|\rvs_t)$ for every message.

We first adapt the posterior matching principles introduced by \citet{shayevitz2011optimal} to incorporate common randomness. At each time $t$, the observer has access to the state sequence $\rvs^{t}$ and the common random sequence $\rvu^{t-1}$, which together provide information about the message $\rvm$.
Let  $\rho_t(\cdot)$ denote the observer's posterior distribution over $\rvm$, defined as
\begin{align*}
	\rho_t(m) \triangleq P(\rvm = m|\rvs^t,\rvu^{t-1}),\ \forall m\in \set{M}.
\end{align*}
Specifically, $\rho_1(m) = P_\rvm(m)$ as $\rvm$ and $\rvs_1$ are independent. As long as $\rho_t$ is not a point mass, the observer retains some uncertainty about $\rvm$ and therefore requires additional information to identify it. The posterior matching principle addresses the following question: \textit{given $\rvs^t$ and $\rvu^{t-1}$, what additional information is needed to determine $\rvm$?} A natural candidate is a random variable $\rvw_t$ satisfying the following conditions: (i) $\rvw_t$ is independent of $(\rvs^t,\rvu^{t-1})$; and (ii) the message $\rvm$ can be uniquely determined by $(\rvw_t, \rvs^{t}, \rvu^{t})$. They are referred to as two \textit{design principles}.

Intuitively, the first principle ensures that $\rvw_t$ does not convey any ``old information'' already available to the observer, while the second guarantees that $(\rvw_t,\rvu_t)$ carries sufficient ``new information" to determine the message. Given an appropriate construction of $\rvw_t$, the action $\rva_t$ is then selected as a function of $\rvw_t$ while ensuring that its conditional distribution matches the reference policy $\pi(\cdot\mid\rvs_t)$.

Following this idea, we are now ready to introduce our coding scheme. First, define
\begin{align*}
	b_t(m) \triangleq P(\rvm<m|\rvs^t,\rvu^{t-1}) = \sum_{j<m} \rho_t(j).
\end{align*}
Note that $(\rvs^t,\rvu^{t-1})$ is also available to the agent; therefore, the agent can also compute $\rho_t$ and $b_t$ at each time.
We then construct the information point at time $t$ as follows:
\begin{align*}
	\rvw_t(\rvm) = \left[b_t(\rvm) + l_t(\rvm)\rho_t(\rvm) + \rvu_t\right] \mod 1,
\end{align*}
where $l_t: \set{M}\to (0,1)$ is a deterministic function and $\rvu_t\sim \text{Unif}[0,1]$ independent of all the other variables. For each $m\in\set{M}$, the quantity $b_t(m)+l_t(m)\rho_t(m)$ lies within the interval $(b_t(m),b_t(m+1))$, and $\rvw_t(m)$ is obtained by applying a random cyclic shift to this point, as illustrated in Figure~\ref{fig:algo}. We write $\rvw_t$ in place of $\rvw_t(\rvm)$ when referring to the random variable induced by $\rvm$, rather than to its value for a particular realization of $\rvm$.
The choice of $l_t$ determines the pairwise distances between information points and thus affects the observer's decoding error probability. In this paper, we set $l_t(m)=0.5$ for all $m\in\set{M}$, as this choice simplifies the reliability analysis.

It is straightforward to verify that $\rvw_t$ satisfies the two design principles stated above. First, for every $m\in\set{M}$ and any realization $(s^t,u^{t-1})$, we have $\rvw_t(m)\sim\operatorname{Unif}[0,1]$,
which implies that $\rvw_t$ is independent of $(\rvs^t,\rvu^{t-1})$. Second, the message $\rvm$ can be uniquely recovered from $\rvw_t$, $\rvu_t$, $\rho_t$, and $b_t$. Because $\rho_t$ and $b_t$ are determined by $(\rvs^t,\rvu^{t-1})$, the second design principle also holds.

The next step is to select an action based on $\rvw_t$ while preserving the desired distribution. To this end, let $F_{\rva_t|\rvs_t}(\cdot)$ denote the cumulative distribution function (CDF) of $\pi(\cdot|\rvs_t)$. Its generalized inverse is defined as
\begin{align*}
	 F^{-1}_{\rva_t|\rvs_t}(w) := \inf \{a\in \set{A}: F_{\rva_t|\rvs_t}(a)\ge w \}, \forall w\in [0,1].
\end{align*}
The action at time $t$ is then determined by $\rva_t = F^{-1}_{\rva_t|\rvs_t}(\rvw_t) $.
Upon execution of $\rva_t$, the MDP transitions to a new state $\rvs_{t+1}$, which is observed by both the agent and the observer. Given any realization $s^{t+1}$, the posterior distribution is then updated according to Bayes' rule:
\begin{align} \label{eq:rho-update}
	\rho_{t+1}(m) = \frac{\rho_t(m) T(s_{t+1}|s_t,a_t(m))}{\sum_j \rho_t(j) T(s_{t+1}|s_t,a_t(j))}, \ \forall m\in \set{M},
\end{align}
where $a_t(m) = F^{-1}_{\rva_t|\rvs_t}(\rvw_t(m))$. At the end of the transmission horizon, i.e., time $n+1$, the observer can decode the message using the posterior distribution $\rho_{n+1}$. A natural choice is the maximum a posteriori (MAP) estimator:
\begin{align*}
	\hat{\rvm} = \argmax_{m\in \set{M}} \rho_{n+1}(m).
\end{align*}
We refer to this method as \textit{posterior-matching randomness modulation (PRAM)}. 

It remains to show that PRAM preserves the prescribed input distribution for every message $m$; this is formally established in the following result.

\begin{theorem}[Distribution Preservation]\label{thm: DP}
	Under the PRAM scheme, the following holds for every realization $m\in \set{M}$ of the message $\rvm$: 
	\begin{align*}
		P(a_t|s^t,u^{t-1},m) = \pi(a_t|s_t), \ \forall a_t, s^t.
	\end{align*}
\end{theorem}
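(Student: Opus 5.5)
The plan is to show that, conditioned on $(s^t,u^{t-1},m)$, the information point $\rvw_t(m)$ is uniform on $[0,1]$. After that, distribution preservation follows from the inverse-CDF sampling argument applied to $F_{\rva_t|\rvs_t}$.

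\textbf{Step 1.} Fix $m$ and a history realization $(s^t,u^{t-1})$. The quantities $\rho_t$ and $b_t$ are computed recursively from $(s^t,u^{t-1})$ by the Bayes update in \eqref{eq:rho-update}, together with the deterministic maps $a_\tau(j)=F^{-1}_{\rva_\tau|\rvs_\tau}(\rvw_\tau(j))$. Hence, given this conditioning, the shift $c \triangleq b_t(m)+0.5\,\rho_t(m)$ is a deterministic constant in $[0,1)$. By assumption, $\rvu_t$ is independent of $(\rvm,\rvs^t,\rvu^{t-1},\rva^{t-1})$, so it remains $\mathrm{Unif}[0,1]$ under this conditioning. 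I will check this carefully as an induction on the information structure. The only subtlety is that $\rvs^t$ depends on $\rvu^{t-1}$ and on $\rvm$ through past actions. However, $\rvu_t$ is drawn fresh and is independent of all of these variables, so conditioning on the event $\{\rvs^t=s^t,\rvu^{t-1}=u^{t-1},\rvm=m\}$ leaves its law unchanged. Note that this event is an event involving $\rvu^{t-1}$ and other variables independent of $\rvu_t$.

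\textbf{Step 2.} A uniform variable shifted cyclically by a constant is again uniform: $(c+\rvu_t)\bmod 1\sim\mathrm{Unif}[0,1]$. To verify this, compute $P(\rvw_t(m)\le x)$ for $x\in[0,1]$ by splitting into the cases $\rvu_t\le 1-c$ and $\rvu_t>1-c$. The two pieces have total length $x$. Hence $\rvw_t(m)\mid(s^t,u^{t-1},m)\sim\mathrm{Unif}[0,1]$.

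\textbf{Step 3.} Since $\rva_t=F^{-1}_{\rva_t|\rvs_t}(\rvw_t)$, the action is a deterministic function of $\rvw_t$ and $s_t$. For $a\in\set{A}$, the generalized inverse satisfies $F^{-1}(w)=a$ iff $w\in(F(a-1),F(a)]$, with $F(0)=0$; the value $w=0$ is a null event. The probability of this interval under the uniform law is $F(a)-F(a-1)=\pi(a|s_t)$. This gives $P(a_t|s^t,u^{t-1},m)=\pi(a_t|s_t)$.

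The main obstacle is the rigor of Step 1: making precise that conditioning on $(s^t,u^{t-1},m)$ does not alter the law of $\rvu_t$. I would handle it by writing the joint density/mass of $(\rvm,\rvs^t,\rvu^{t-1},\rvu_t)$ as a product in which $\rvu_t$ contributes an independent factor. This works because $\rvs_{\tau+1}$ is generated from $T(\cdot|s_\tau,a_\tau)$ with $a_\tau$ a function of $(m,s^\tau,u^\tau)$, so no factor involves $u_t$.
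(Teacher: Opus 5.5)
Your proposal is correct and follows essentially the same route as the paper: show that $\rvw_t(m)$ is uniform on $[0,1]$ given $(s^t,u^{t-1},m)$, then apply inverse-CDF sampling. The paper phrases the second step as $P(F^{-1}(\rvw_t(m))\le a)=P(F(a)\ge \rvw_t(m))=F(a)$, whereas you use the equivalent interval $(F(a-1),F(a)]$; your factorization argument for why $\rvu_t$ stays uniform under this conditioning fills in a step the paper only asserts in the main text as straightforward.
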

\begin{proof}
	The proof is provided in Appendix~\ref{app:thm1}.
\end{proof}

Note that PRAM selects each action $\rva_t$ based $\rvs^{t},\rvu^{t}$, and $\rvm$.
Hence Theorem~\ref{thm: DP} means that, when averaged over the common random variable $\rvu_{t}$, PRAM preserves the action distribution prescribed by the reference policy. Consequently, the communication capability is achieved without degrading the expected control reward. This distribution-preserving property is a key advantage of PRAM.

\subsection{Reliability Analysis}
According to the design principles of PRAM, $\rvw_t$ carries information about the message $\rvm$, and so does the action $\rva_t$, since it is generated as a function of $\rvw_t$. However, this information is conveyed to the observer only through the stochastic state transitions of the system and may therefore be corrupted by transition noise. It is thus not immediately clear whether PRAM enables reliable message recovery. In this subsection, we establish a theoretical guarantee of asymptotic reliability: the observer's decoding error probability can be made arbitrarily small by increasing the blocklength.

We first introduce the following assumptions, which are essential for our analysis.
\begin{assumption}  \label{as: recurrence}
	The Markov chain induced by the reference policy $\pi$ is irreducible.
\end{assumption}
\begin{assumption}\label{as: KL}
	Under the reference policy $\pi$, there exists a state $s^*$ such that $\pi(\cdot|s^*)$ is nondegenerate and $D_{\text{KL}}(T(\cdot|s^*,a)|| T(\cdot|s^*,a'))>0$ for any pair of distinct actions $a$ and $a'$ in the support of $\pi(\cdot|s^*)$.
\end{assumption}
Irreducibility is a standard assumption in the analysis of finite-state MDPs and naturally yields the property required in Assumption~\ref{as: recurrence}. Assumption~\ref{as: KL} ensures the existence of at least one recurrent state at which the reference policy is genuinely stochastic and the transition distributions induced by different actions are distinguishable. Now we are ready to present the reliability guarantee.

\begin{theorem}[Asymptotic Reliability] \label{thm:reliability}
	Let $m^*\in \set{M}$ be the message transmitted by the agent. Then,  under Assumptions 1 and 2,
	\begin{align*}
		\rho_n(m^*)  \xrightarrow{\mathrm{a.s.}} 1 \quad \text{ as } n\to \infty.
	\end{align*}
As a result, the observer's decoding error probability $P^{(n)}_e\to 0$ as $n\to \infty$.
\end{theorem}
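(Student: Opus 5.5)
Throughout, fix $m^*$ and condition on it. The state process is then a Markov chain with kernel $P_\pi(s'|s)=\sum_a\pi(a|s)T(s'|s,a)$. This holds because, by Theorem~\ref{thm: DP}, $\rva_t\sim\pi(\cdot|\rvs_t)$ given $(\rvs^t,\rvu^{t-1},m^*)$, and $\rvs_{t+1}$ depends only on $(\rvs_t,\rva_t)$.

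The approach is a likelihood-ratio / martingale argument applied separately to each wrong message $m\neq m^*$. Define the log-ratio
\begin{align*}
L_t(m)\triangleq\log\frac{\rho_t(m)}{\rho_t(m^*)}.
\end{align*}
From \eqref{eq:rho-update}, its increments are
\begin{align*}
\Delta_t(m)=\log\frac{T(\rvs_{t+1}|\rvs_t,a_t(m))}{T(\rvs_{t+1}|\rvs_t,a_t(m^*))}.
\end{align*}
Conditionally on $\mathcal F_t=\sigma(\rvs^t,\rvu^t)$, the true next state has law $T(\cdot|\rvs_t,a_t(m^*))$. Hence
\begin{align*}
E[\Delta_t(m)|\mathcal F_t]=-D_{\text{KL}}\bigl(T(\cdot|\rvs_t,a_t(m^*))\,\|\,T(\cdot|\rvs_t,a_t(m))\bigr)\le 0.
\end{align*}
Write $L_n(m)=L_1(m)+M_n-\sum_{t<n}D_t$, with $M_n$ a martingale and $D_t$ the KL term. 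Finiteness requires a mild support condition: $T(\cdot|s,a)\ll T(\cdot|s,a')$ for $a,a'$ in the support of $\pi(\cdot|s)$. If this fails, an impossible transition simply sets $\rho(m)=0$, which only helps.

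The goal is to show $L_n(m)\to-\infty$ a.s. for every $m\ne m^*$. Since $|\set{M}|$ is finite, this gives $\rho_n(m^*)\to1$ a.s. By dominated convergence, MAP decoding then errs with vanishing probability.

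\textbf{Step 1: a uniform one-step drift at $s^*$.} Whenever $\rvs_t=s^*$, I claim the event $a_t(m)\ne a_t(m^*)$ has conditional probability (over $\rvu_t$) bounded below by a constant depending only on the current posterior. Indeed, $\rvw_t(m)-\rvw_t(m^*)$ equals, modulo 1, the deterministic gap
\begin{align*}
\delta=|b_t(m)+\rho_t(m)/2-b_t(m^*)-\rho_t(m^*)/2|\ge(\rho_t(m)+\rho_t(m^*))/2.
\end{align*}
Uniformly rotating two points at circular distance $\delta'=\min(\delta,1-\delta)$ over the partition of $[0,1]$ into the CDF intervals of $\pi(\cdot|s^*)$ separates them with probability at least $\min(\delta',\pi_{\min})$. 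Here $\pi_{\min}$ is the smallest positive mass of $\pi(\cdot|s^*)$, and there are at least two such intervals by nondegeneracy. Let $\kappa>0$ be the minimum of the KL divergences in Assumption~\ref{as: KL}. Then
\begin{align*}
E[D_t|\mathcal F_{t-1},\rvs_t=s^*]\ge\kappa\min\bigl(\delta',\pi_{\min}\bigr).
\end{align*}
The subtle case is $\delta$ near $1$, i.e. $1-\delta$ small. This happens only when $m,m^*$ sit at opposite ends and hold nearly all the mass. In that case both lie within small masses of the endpoints, and the rotation separates them unless the seam falls in the short arc.

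\textbf{Step 2: recurrence.} Under Assumption~\ref{as: recurrence}, the finite chain visits $s^*$ infinitely often a.s., with positive asymptotic frequency $\mu(s^*)$ by the ergodic theorem. Consequently $\sum_t D_t=\infty$ a.s. on the event that $\rho_t(m)+\rho_t(m^*)$ stays bounded away from zero along these visits.

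\textbf{Step 3 (the main obstacle).} The drift bound degenerates when $\rho_t(m^*)\to0$, so the argument must exclude this. I would first show that $\rho_t(m^*)$ does not vanish. The process $1/\rho_t(m^*)$ (equivalently $\rho_t(m)/\rho_t(m^*)$) is a nonnegative supermartingale under the true law, since it is a likelihood ratio. It therefore converges a.s. to a finite limit, so $\inf_t\rho_t(m^*)>0$ a.s. and $\delta'$ is bounded below along the path. For the near-$1$ wrap-around case I use $\rho_t(m^*)\ge c>0$ to show the short arc has length at most $1-c$, which keeps the separation probability positive.

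\textbf{Step 4: conclusion.} Return to $L_n(m)$. The martingale $M_n$ has bounded increments on the finite support, so $M_n=o(\sum_t D_t)$ a.s. by the martingale strong law; alternatively, the sum $\sum D_t$ diverges linearly. Since $\sum D_t$ diverges by Steps 1–3, $L_n(m)\to-\infty$ a.s. for each $m\neq m^*$, which completes the proof.
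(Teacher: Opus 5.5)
Your proposal is correct and follows essentially the paper's argument. You split the log-posterior ratio into a KL drift plus a bounded-increment martingale, and use the likelihood-ratio (super)martingale to keep $\rho_t(m^*)$ bounded away from zero. You then turn that into a lower bound on the circular distance between message points, hence on the probability that $a_t(m)\neq a_t(m^*)$ at $s^*$, and let recurrence of $s^*$ make the drift grow linearly. Your explicit bound $\min(\delta',\pi_{\min})$, and your use of Theorem~\ref{thm: DP} to get a genuine $P_\pi$-Markov chain for the ergodic theorem, are slightly tidier than the paper's compactness and expected-frequency steps.

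The one thing to fix is the wrap-around discussion in Step~1, which is a non-issue. One has $1-\delta\ge(\rho_t(m)+\rho_t(m^*))/2$ just as for $\delta$, so $\delta'\ge(\rho_t(m)+\rho_t(m^*))/2$ always, which is the bound the paper uses. Your description also has it backwards: $1-\delta$ can be small only when both $m$ and $m^*$ carry \emph{little} posterior mass.
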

\begin{proof}
	The proof is provided in Appendix~\ref{app:thm2}.
\end{proof}
The significance of Theorem~\ref{thm:reliability} lies in the per-message distribution-preserving constraint of PRAM. As discussed earlier, requiring the action distribution to coincide with the reference policy for every message implies $I(\rva_t;\rvm|\rvs_t)=0$, meaning that, conditioned on the current state, the action alone reveals no additional information about the message. The introduction of common randomness $\rvu_t$, however, creates a message-dependent coupling between the action selection and the shared randomization, making communication possible without altering the prescribed action distribution. Theorem~\ref{thm:reliability} confirms that this mechanism is sufficient for reliable communication: despite the per-message distribution-preserving constraint, the observer can recover the transmitted message with vanishing error probability as the blocklength grows.

\section{Experiments} \label{sec:exp}
This section presents experimental results demonstrating the effectiveness of the proposed PRAM method. We evaluate PRAM in three MDP environments: ReFrozenLake with 64 states and 4 actions, ConTaxi with 500 states and 6 actions, and Stochastic Battery Management (SBM) with 1000 states and 5 actions. Detailed descriptions of these environments are provided in Appendix~\ref{app: envs}.

\begin{table}[t]
	\centering
	\caption{Total variation (TV) distance between the empirical action distribution induced by PRAM and the corresponding reference policy for different messages.}
	\label{tab:tv_distance}
	\begin{tabular}{llcccccccc}
		\toprule
		\multirow{2}{*}{\textbf{Environment}}
		& \multirow{2}{*}{\shortstack{\textbf{Reference}\\\textbf{Policy}}}
		& \multicolumn{8}{c}{\textbf{Message}} \\
		\cmidrule(lr){3-10}
		&
		& \textbf{0} & \textbf{1} & \textbf{2} & \textbf{3}
		& \textbf{4} & \textbf{5} & \textbf{6} & \textbf{7} \\
		\midrule
		
		\multirow{2}{*}{ReFrozenLake}
		& Policy 1
		& 0.012 & 0.012 & 0.010 & 0.011
		& 0.011 & 0.010 & 0.010 & 0.009 \\
		
		& Policy 2
		& 0.015 & 0.011 & 0.010 & 0.010
		& 0.011 & 0.011 & 0.012 & 0.011 \\

		\midrule
		
		\multirow{2}{*}{ConTaxi}
		& Policy 1
		& 0.007 & 0.006 & 0.006 & 0.007
		& 0.006 & 0.007 & 0.007 & 0.006 \\
		
		& Policy 2
		& 0.008 & 0.008 & 0.008 & 0.009
		& 0.008 & 0.008 & 0.008 & 0.008 \\

		\midrule
		
		\multirow{2}{*}{SBM}
		& Policy 1
		& 0.023 & 0.023 & 0.023 & 0.023
		& 0.023 & 0.023 & 0.023 & 0.023 \\
		
		& Policy 2
		& 0.029 & 0.028 & 0.028 & 0.029
		& 0.029 & 0.030 & 0.028 & 0.029 \\
		
		\bottomrule
	\end{tabular}
\end{table}

\paragraph{Distribution preservation.}
We first verify whether PRAM preserves the action distribution for every state and every message. To this end, we quantify the total variation (TV) distance between the reference policy $\pi$ and the empirical action distribution induced by PRAM under each message $m\in\set{M}$. Specifically, we define
$D_{\mathrm{TV}}(m)\triangleq \frac{1}{2} \sum_{s} \mu(s)  \sum_{a} |\pi(a|s) - P(a|s,m)|$,
where $\mu(\cdot)$ denotes the stationary state distribution induced by the reference policy, and $P(a|s,m)$ is the empirical probability of selecting action $a$ in state $s$ when message $m$ is transmitted using PRAM. We set the message length to $B=3$ bits and blocklength to $n=20$, and evaluate $D_{\mathrm{TV}}(m)$ for all eight possible messages. The results are reported in Table~\ref{tab:tv_distance}. The TV distance remains close to zero across all tested messages and environments, empirically confirming the per-message distribution-preserving property of PRAM. Consequently, PRAM can embed information without altering the nominal action distribution and hence without degrading the expected control reward. Additional results in Appendix~\ref{app:add-exp} further show that the control reward obtained under each transmitted message closely matches that of the reference policy.

\paragraph{Applications.}
The ability of PRAM to embed information into stochastic policy realizations while preserving the prescribed action distribution naturally enables several downstream applications. We consider two representative tasks: \textit{multi-bit policy watermarking} and \textit{trajectory provenance}. In policy watermarking, a fixed multi-bit identifier is associated with a policy and repeatedly embedded into its behavior, providing a persistent signature for ownership verification. In contrast, trajectory provenance aims to determine the origin of an observed trajectory, e.g., which deployed agent or execution session produced it; the embedded information may therefore vary across episodes, agents, or deployment instances. We evaluate multi-bit policy watermarking in ReFrozenLake and SBM, and trajectory provenance in ConTaxi. For each task, we vary the message length and report the decoding accuracy as a function of the trajectory blocklength.

Although these two applications differ in the semantics of the embedded information, both can be realized through the same PRAM communication mechanism. For the watermarking task, we report performance averaged over different identifiers, making the experimental protocol comparable to that used for trajectory provenance. The results are shown in Figure~\ref{fig:accuracy}. The decoding accuracy is defined as the fraction of trials in which all message bits are correctly recovered, i.e., $1-P_e^{(n)}$. As shown in the figure, the decoding accuracy increases monotonically with the blocklength and approaches one for all tested message lengths even at reasonably low trajectory lengths ($\sim 40$-$50$).

\begin{figure}[t]
	\vspace{-5pt}
	\begin{subfigure}{0.33\linewidth}
		\centering
		\includegraphics[width=0.95\linewidth]{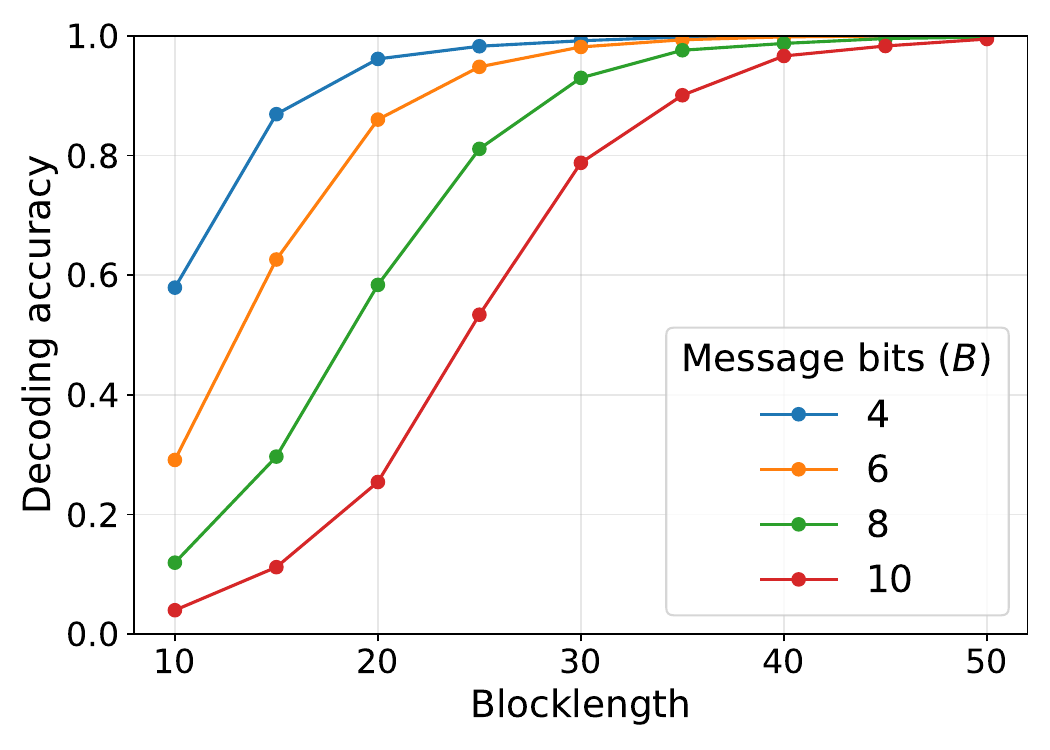}
		%\caption{\small{Trade-off between BER vs Reward over different rates.}}
		\caption{ReFrozenLake}
		\label{fig:accuracy_RFL}
	\end{subfigure} 
	\begin{subfigure}{0.33\linewidth}
		\centering
		\includegraphics[width=0.95\linewidth]{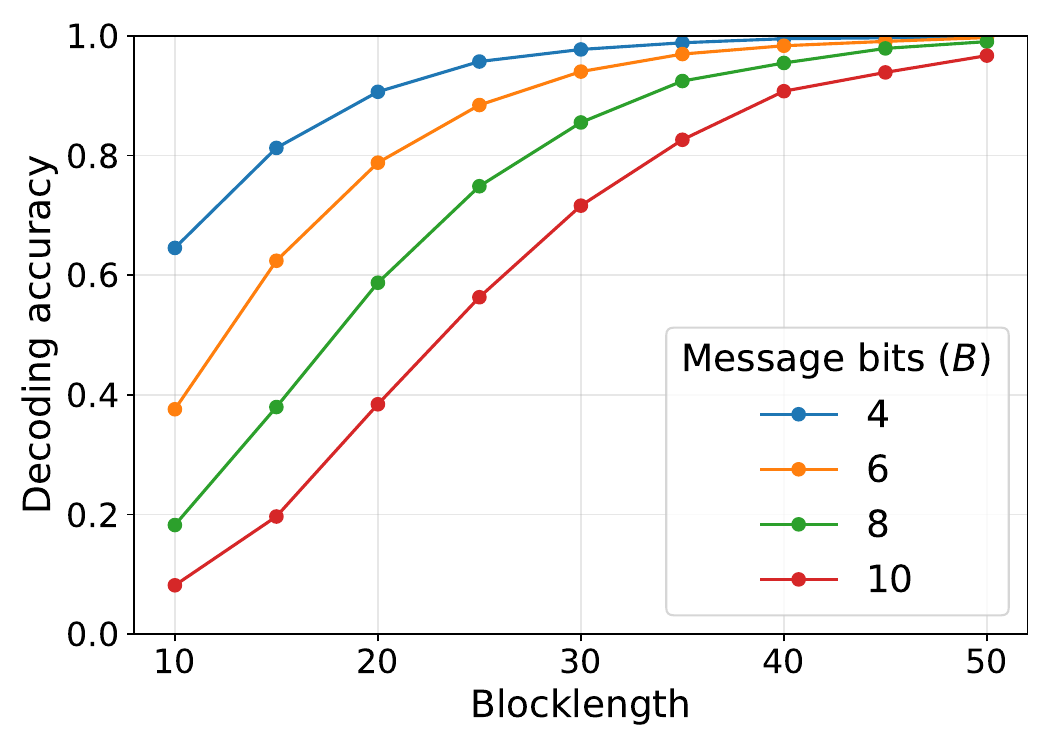}
		%\caption{\small{BER vs rate for a given reward}}
		\caption{ConTaxi}
		\label{fig:accuracy_Taxi}
	\end{subfigure} 
	\begin{subfigure}{0.33\linewidth}
		\centering
		\includegraphics[width=0.95\linewidth]{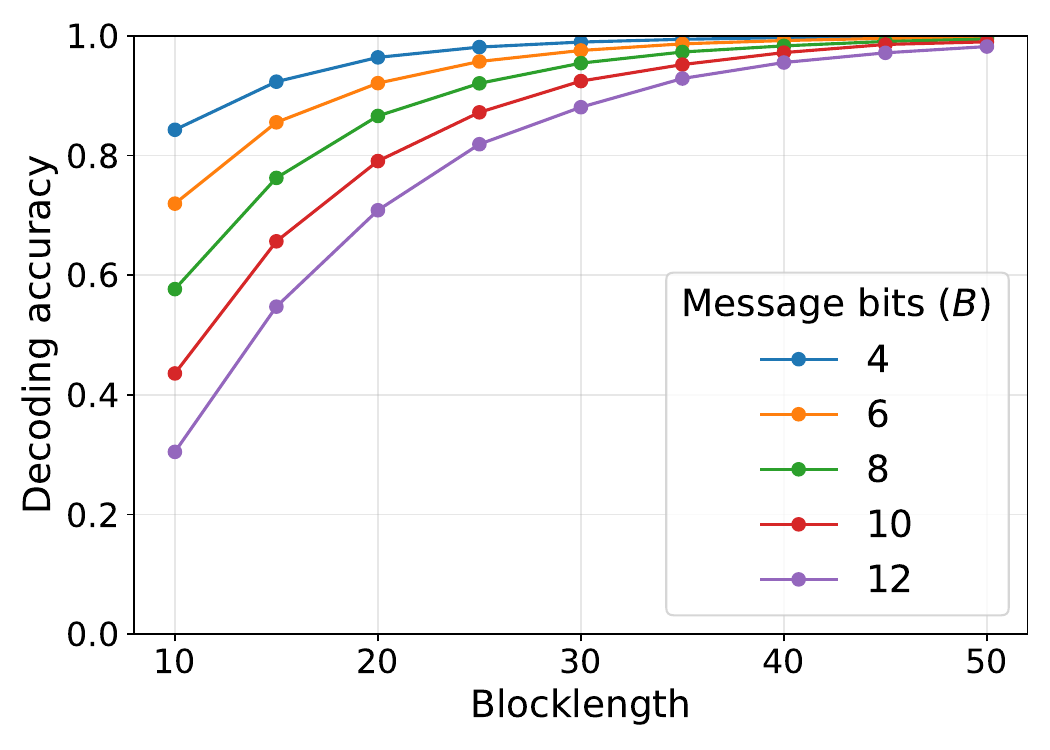}
		%\caption{\small{Reward vs rate for a given BER}}
		\caption{SBM}
		\label{fig:accuracy_SBM}
	\end{subfigure} 
	\caption{PRAM performance on policy watermarking in ReFrozenLake and SBM, and trajectory provenance in ConTaxi.}
	\label{fig:accuracy}
	\vspace{-5pt}
\end{figure}

\paragraph{Effect of randomness. }
We next investigate how different sources of randomness affect the communication performance of PRAM. In particular, we consider randomness arising from both the stochastic reference policy and the system dynamics. To study the effect of policy randomness, we obtain the reference policy using entropy-regularized value iteration with temperature parameter $\alpha$, where $\alpha=0$ corresponds to the optimal deterministic policy and a larger $\alpha$ induces a more stochastic policy. We then evaluate PRAM under reference policies with different temperatures in ReFrozenLake and SBM. The results are shown in Figures~\ref{fig:rand_RFL} and~\ref{fig:rand_SBM}. As expected, greater policy randomness generally leads to higher decoding accuracy, since it provides more degrees of freedom for embedding information.

We examine the effect of system randomness in ConTaxi, where the taxi moves in the intended direction with probability $p$ and in either lateral direction with probability $(1-p)/2$.  Hence the resulting transition uncertainty is characterized by the entropy $H(p)= -p\log p - (1-p) \log \frac{1-p}{2}$.
We fix the reference policy and evaluate PRAM in ConTaxi under different rainy probabilities $p$, as shown in Figure~\ref{fig:rand_Taxi}. From a communication perspective, randomness in the system dynamics acts as channel noise and makes the transmitted information more difficult to recover from the state trajectory. Therefore, lower transition uncertainty generally leads to higher decoding accuracy.

\begin{figure}[t]
	\vspace{-5pt}
	\begin{subfigure}{0.33\linewidth}
		\centering
		\includegraphics[width=0.95\linewidth]{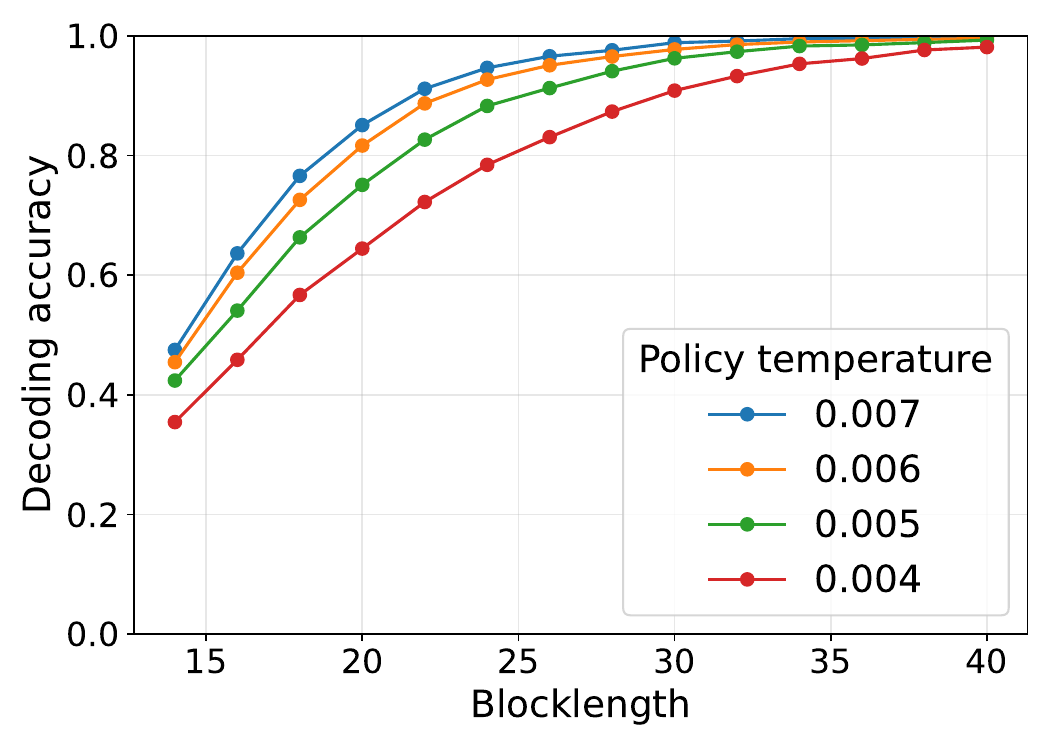}
		%\caption{\small{Trade-off between BER vs Reward over different rates.}}
		\caption{ReFrozenLake }
		\label{fig:rand_RFL}
	\end{subfigure} 
	\begin{subfigure}{0.33\linewidth}
		\centering
		\includegraphics[width=0.95\linewidth]{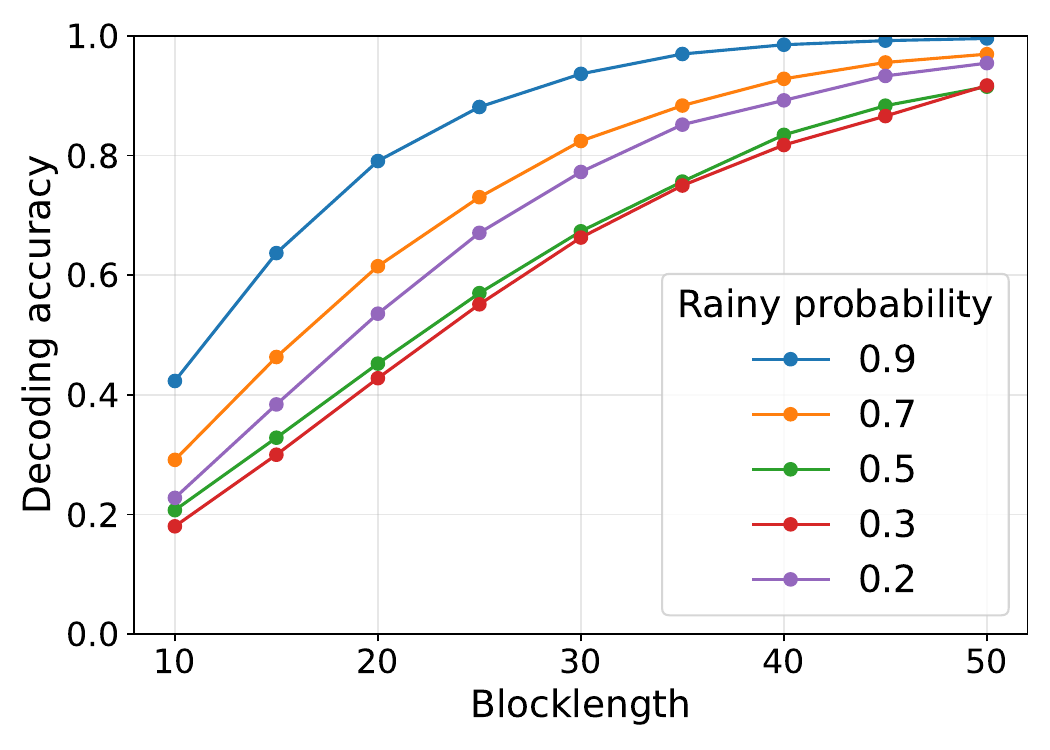}
		%\caption{\small{BER vs rate for a given reward}}
		\caption{ConTaxi}
		\label{fig:rand_Taxi}
	\end{subfigure} 
	\begin{subfigure}{0.33\linewidth}
		\centering
		\includegraphics[width=0.95\linewidth]{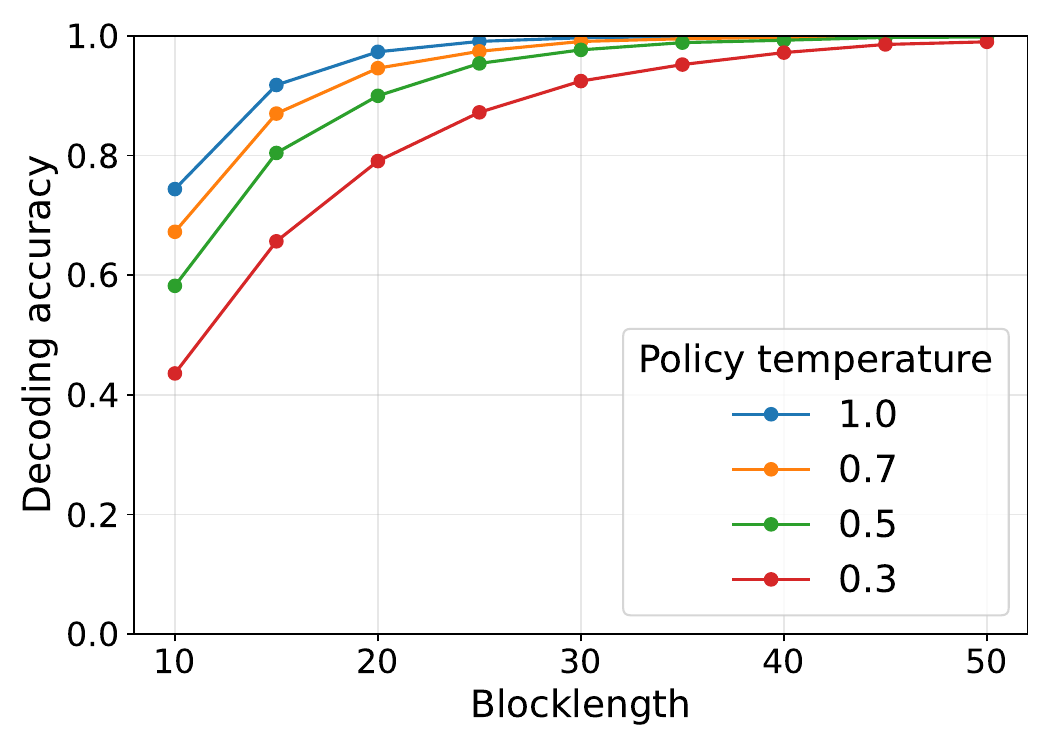}
		%\caption{\small{Reward vs rate for a given BER}}
		\caption{SBM }
		\label{fig:rand_SBM}
	\end{subfigure} 
	\caption{Effect of randomness on PRAM. (a) and (c): Policy randomness is varied through the temperature parameter. (b) System randomness is varied through the rainy probability.}
	\label{fig:rand}
	\vspace{-5pt}
\end{figure}

\paragraph{Model mismatch.}
We further examine the robustness of PRAM to model mismatch. Specifically, suppose that neither the agent nor the observer knows the true transition kernel $T$ of the MDP, and both instead rely on an estimated kernel $\hat{T}\neq T$. PRAM therefore performs action selection and posterior updates using $\hat{T}$, while the observed state trajectory is generated according to the true kernel $T$. Such model mismatch naturally arises in RL, where the transition dynamics must typically be estimated from interaction data.

We quantify the discrepancy between $T$ and $\hat{T}$ using the occupancy-weighted TV distance $	d(T,\hat{T})\triangleq \frac{1}{2} \sum_{s} \mu(s) \sum_{a} \pi(a|s)  \sum_{s'} |T(s'|s,a) - \hat{T}(s'|s,a)|$,
where $\mu$ is the stationary state distribution induced by the reference policy $\pi$.
Figure~\ref{fig:mismatch} reports the decoding performance of PRAM under different levels of model mismatch. For each environment, we estimate $\hat{T}$ from interaction trajectories, with larger amounts of data yielding more accurate estimates and hence smaller TV distances. We then fix the message length and evaluate the decoding accuracy as a function of the blocklength. The results demonstrate that PRAM is robust to moderate model mismatch: high decoding accuracy can still be achieved even when $\hat{T}$ differs noticeably from the true transition kernel. Moreover, across the tested mismatch levels, the decoding accuracy continues to improve with the blocklength and approaches one for sufficiently long trajectories.

\begin{figure}[t]
	\vspace{-5pt}
	\begin{subfigure}{0.33\linewidth}
		\centering
		\includegraphics[width=0.95\linewidth]{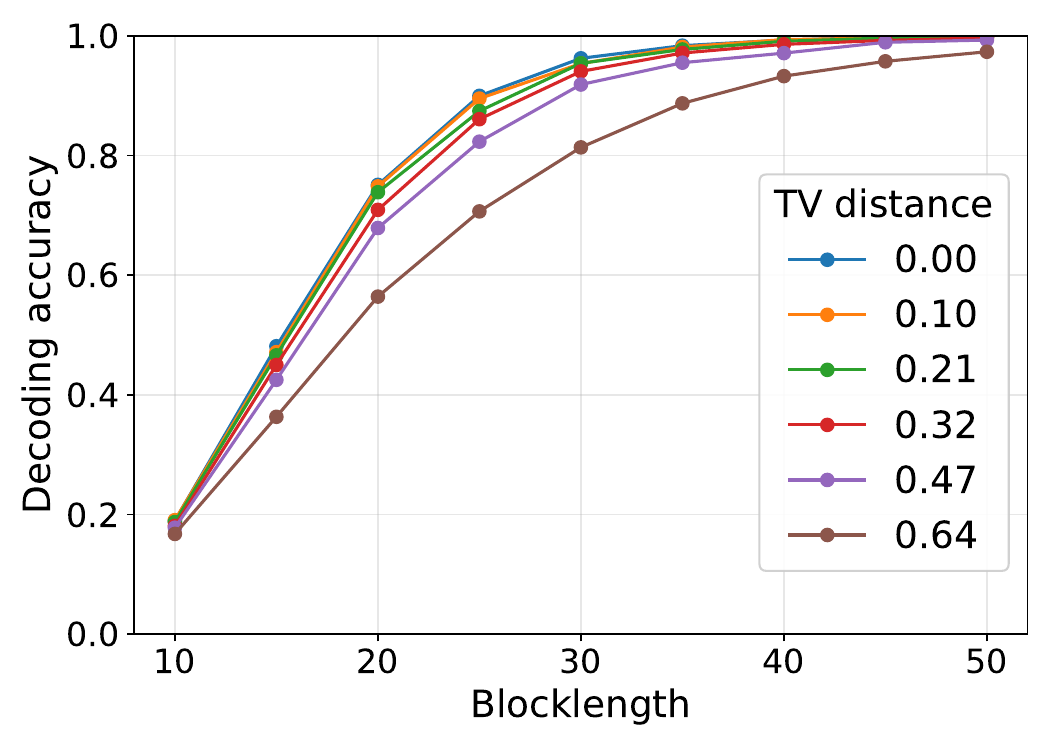}
		%\caption{\small{Trade-off between BER vs Reward over different rates.}}
		\caption{ReFrozenLake}
		\label{fig:mismatch-RFL}
	\end{subfigure} 
	\begin{subfigure}{0.33\linewidth}
		\centering
		\includegraphics[width=0.95\linewidth]{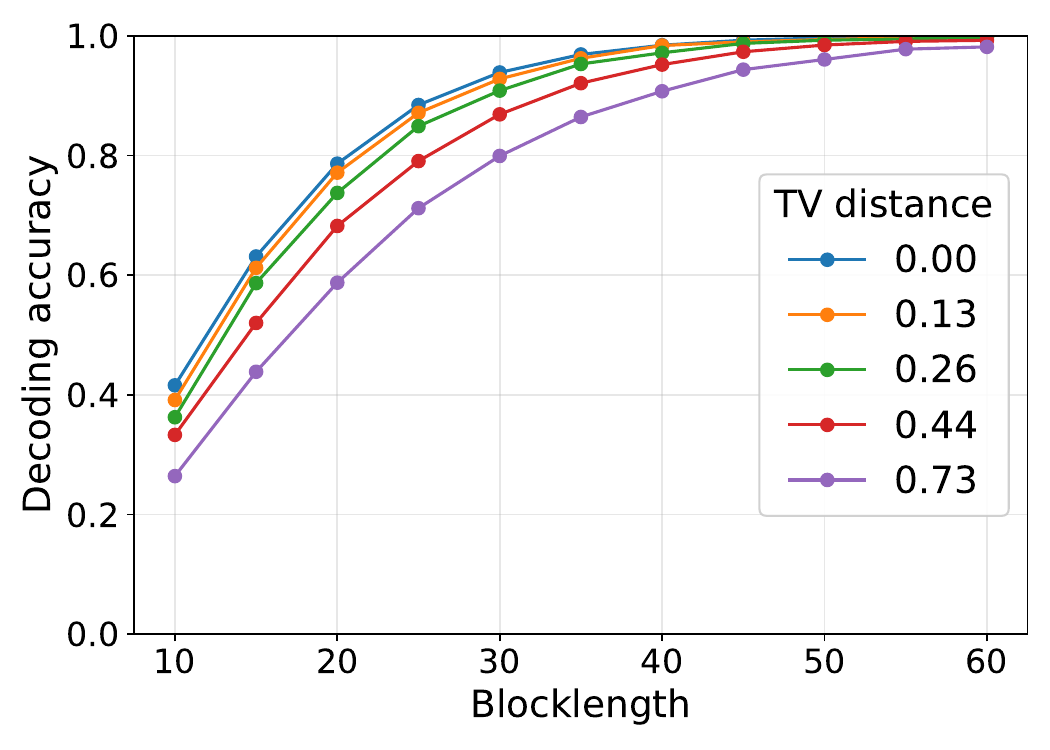}
		%\caption{\small{BER vs rate for a given reward}}
		\caption{ConTaxi}
		\label{fig:mismatch-Taxi}
	\end{subfigure} 
	\begin{subfigure}{0.33\linewidth}
		\centering
		\includegraphics[width=0.95\linewidth]{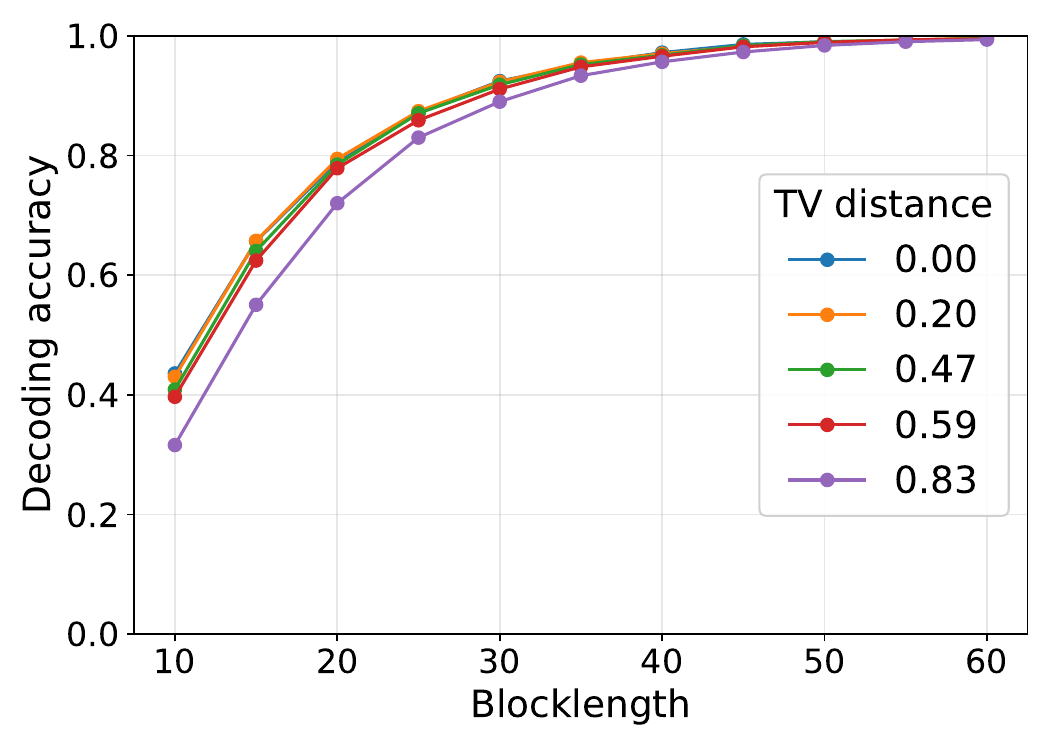}
		%\caption{\small{Reward vs rate for a given BER}}
		\caption{SBM}
		\label{fig:mismatch-SBM}
	\end{subfigure} 
	\caption{Performance of PRAM under model mismatch. The TV distance measures the discrepancy between the true transition kernel and the estimated kernel used by PRAM. }
	\label{fig:mismatch}
	\vspace{-5pt}
\end{figure}

\section{Conclusion}
This paper studied how to embed information into an agent's behavior without altering the action distribution prescribed by a stochastic control policy. We introduced the idea of \emph{randomness modulation} and formulated the problem as DPIC. The key idea is to view the MDP environment as a communication channel, with the agent's action serving as the channel input and the resulting state as the channel output. Unlike conventional channel coding, DPIC imposes a stronger per-message distribution-preserving requirement. To address this challenge, we introduced common randomness shared by the encoder and decoder and developed the PRAM scheme. By modulating action realizations, PRAM establishes message-dependent couplings between the action and the common random variable, enabling information recovery from the observed state trajectory while preserving the prescribed action distribution. We proved the per-message distribution-preserving property of PRAM and established its asymptotic reliability. Extensive experiments demonstrated reliable message recovery, robustness to model mismatch, and the applicability of PRAM to tasks including multi-bit policy watermarking and trajectory provenance. Extending DPIC to partially observable settings, where the observer has only incomplete or noisy state information~\citep{krishnamurthy2016partially}, is an important and challenging direction for future work.

\clearpage

%\subsubsection*{Acknowledgments}

\bibliography{iclr2027_conference}
\bibliographystyle{iclr2027_conference}

\clearpage
\appendix
\section*{Appendix}
\section{Technical Proofs} \label{app:proof}
\subsection{Proof of Theorem~\ref{thm: DP}} \label{app:thm1}
\begin{proof}
	The desired result follows from an argument analogous to that used in Lemma II.1 of  \citep{shayevitz2011optimal}. In particular, for every $m\in \set{M}$,
	\begin{align*}
		P(F^{-1}_{\rva_t|\rvs_t}(\rvw_t(m))\le a) & = P\left( \inf\{a':  F_{\rva_t|\rvs_t}(a')\ge \rvw_t(m)\} \le a \right) \\
		& \overset{(a)}{=} P\left( F_{\rva_t|\rvs_t}(a) \ge \rvw_t(m)\right) \\
		& \overset{(b)}{=} F_{\rva_t|\rvs_t}(a).
	\end{align*}
	Here, (a) holds because a CDF is non-decreasing and continuous from the right, and (b) follows from the fact that $\rvw_t(m)\sim\operatorname{Unif}[0,1]$ for every $m\in \set{M}$. By definition, the above implies that $\rva_t = F^{-1}_{\rva_t|\rvs_t}(\rvw_t(m))\sim \pi(\cdot|\rvs_t)$ for  every realization $m$ of the message.
\end{proof}

\subsection{Proof of Theorem~\ref{thm:reliability}} \label{app:thm2}
This part provides the proof of Theorem~\ref{thm:reliability}. Before that, we need the following lemma.
\begin{lemma} \label{lem: Ktm}
	Let $m^*\in \set{M}$ be the message transmitted by the agent. For any $m\neq m^*$, define
	\begin{align*}
		K_t(m) = \mathbf{1}\{\rvs_t = s^*, \rva_t(m^*)\neq \rva_t(m) \}.
	\end{align*}
Then we have
\begin{align*}
	\liminf_{n\to \infty} \frac{1}{n} K_n(m) \ge \delta_m,
\end{align*}
where $\delta_m>0$ is a constant.
\end{lemma}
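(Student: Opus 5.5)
We read $\frac{1}{n}K_n(m)$ as the running average $\frac{1}{n}\sum_{t=1}^n K_t(m)$. We will prove $\liminf_n \frac1n\sum_{t\le n}K_t(m)\ge\delta_m$ almost surely, where $\delta_m$ may depend on the sample path but is a.s. strictly positive. Let $\mathcal{F}_t=\sigma(\rvs^t,\rvu^{t-1})$. The proof has three steps. First, lower bound the one-step probability $P(K_t(m)=1\mid\mathcal{F}_t)$ on $\{\rvs_t=s^*\}$ in terms of $\rho_t(m^*)$. Second, show that $\rho_t(m^*)$ is a.s. bounded away from zero. Third, combine this with the ergodic visit frequency of $s^*$ and a martingale strong law.

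\emph{Step 1 (geometry of the cyclic shift).} Given $\mathcal{F}_t$, both $\rvw_t(m)$ and $\rvw_t(m^*)$ are the same uniform shift $\rvu_t$ applied to the deterministic points $c(j)=b_t(j)+\tfrac12\rho_t(j)$, for $j\in\{m,m^*\}$. Consider the case $m>m^*$ (the other case is symmetric). Then $c(m)-c(m^*)\ge \tfrac12\rho_t(m^*)$, because $b_t(m)\ge b_t(m^*)+\rho_t(m^*)$. Also $1-(c(m)-c(m^*))\ge\tfrac12\rho_t(m^*)$, because $c(m)\le b_t(m+1)\le 1$ and $c(m^*)\ge b_t(m^*)+\tfrac12\rho_t(m^*)$. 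Hence the circular distance between the two points is at least $d_t:=\tfrac12\rho_t(m^*)$.

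At $s^*$, the map $F^{-1}_{\rva_t|\rvs_t}$ partitions the circle $[0,1)$ into $k\ge 2$ arcs, one per action in the support of $\pi(\cdot|s^*)$. Here $k\ge2$ by Assumption~\ref{as: KL}, and each arc has length at least $p_{\min}:=\min_{a\in\mathrm{supp}}\pi(a|s^*)>0$. Rotating two points at circular distance $d$ by a uniform shift, they land in different arcs with probability at least $\min(d,p_{\min})$. This holds because at least one arc boundary lies on a circle with $k\ge2$ arcs, and the shorter connecting arc of length $\min(d,\cdot)$ covers that boundary with that probability. Therefore
\begin{align*}
P(K_t(m)=1\mid \mathcal{F}_t)\ \ge\ \mathbf{1}\{\rvs_t=s^*\}\,\min\!\big(\tfrac12\rho_t(m^*),\,p_{\min}\big).
\end{align*}

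\emph{Step 2 ($\rho_t(m^*)$ stays positive).} Under the true message $m^*$, consider the likelihood ratio $Z_t(m)=\rho_t(m)/\rho_t(m^*)$. By the Bayes update~\eqref{eq:rho-update} it equals $Z_{t-1}(m)\,T(s_{t}|s_{t-1},a_{t-1}(m))/T(s_{t}|s_{t-1},a_{t-1}(m^*))$. It is therefore a nonnegative martingale with respect to $\sigma(\rvs^t,\rvu^{t})$, since $\rvs_{t}$ is drawn from $T(\cdot|s_{t-1},a_{t-1}(m^*))$. By the martingale convergence theorem, $Z_t(m)$ converges a.s. to a finite limit, so $\sup_t Z_t(m)<\infty$ a.s. Since $\rho_t(m^*)=1/(1+\sum_{m\ne m^*}Z_t(m))$, we obtain $\underline\rho:=\inf_t\rho_t(m^*)>0$ a.s. We assume here that $P_\rvm(m^*)>0$.

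\emph{Step 3 (frequency and martingale SLLN).} By Theorem~\ref{thm: DP}, conditionally on the past the action is distributed as $\pi(\cdot|\rvs_t)$. Hence $\{\rvs_t\}$ is exactly the Markov chain induced by $\pi$. By Assumption~\ref{as: recurrence} and the ergodic theorem for irreducible finite chains, $\frac1n\sum_{t\le n}\mathbf{1}\{\rvs_t=s^*\}\to\mu(s^*)>0$ a.s. Next, $M_n=\sum_{t\le n}\big(K_t(m)-P(K_t(m)=1\mid\mathcal{F}_t)\big)$ is a martingale with bounded increments (adapting the filtration to include $\rvs_t$). So $M_n/n\to0$ a.s., for instance via the Azuma--Hoeffding inequality and Borel--Cantelli. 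Combining this with Step 1 and Step 2 gives
\begin{align*}
\liminf_{n\to\infty}\frac1n\sum_{t\le n}K_t(m)\ \ge\ \mu(s^*)\min\!\big(\tfrac12\underline\rho,\,p_{\min}\big)=:\delta_m>0 \quad\text{a.s.}
\end{align*}

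The main obstacle is Step 2. The one-step bound in Step 1 degenerates as $\rho_t(m^*)\to0$, so one must rule out the posterior of the true message collapsing. The likelihood-ratio martingale argument handles this, but it only yields a random, path-dependent $\delta_m$. The lemma's constant should be interpreted in that a.s. sense.
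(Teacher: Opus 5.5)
Your proposal is correct and follows the paper's proof step for step: the circular-distance bound $\alpha_t(m)\ge\tfrac12\rho_t(m^*)$, the likelihood-ratio (super)martingale that keeps $\rho_t(m^*)$ bounded away from zero, a positive disagreement probability at $s^*$, and a martingale strong law combined with the visit frequency of $s^*$. Your changes tighten the same argument rather than change its route: the explicit bound $\min(\alpha,p_{\min})$ replaces the continuity/compactness argument for $q$, Theorem~\ref{thm: DP} gives the pathwise ergodic frequency $\mu(s^*)$ in place of the paper's expected frequency $\theta_n(s^*)$, and you note that $\delta_m$ is path-dependent (the paper's $C_m$ is equally random).
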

\begin{proof}
	Define the message point before applying random cyclic shift as
	\begin{align*}
		w'_t(m) = b_t(m) + \frac{1}{2} \rho_t(m).
	\end{align*}
	Let
	\begin{align*}
		\alpha_t(m) \triangleq \min \{ |w'_t(m^*)-w'_t(m)|, 1- |w'_t(m^*)-w'_t(m)| \}
	\end{align*}
be the circular distance between $w'_t(m^*)$ and $w'_t(m)$.
Then for any time $t$, we have $\alpha_t(m) \le 0.5$ and
\begin{align*}
	\alpha_t(m) \ge \frac{\rho_t(m^*) + \rho_t(m)}{2} \ge \frac{\rho_t(m^*) }{2}. 
\end{align*}

For any $m\neq m^*$, define the likelihood ratio
\begin{align*}
	R_t(m) = \frac{P(\rvs^{t},\rvu^{t-1}|m)}{P(\rvs^{t},\rvu^{t-1}|m^*)}
\end{align*}
According to Bayes' rule, we have
\begin{align*}
	\rho_t(m^*) &= \frac{P(\rvs^{t},\rvu^{t-1},m^*)}{P(\rvs^{t},\rvu^{t-1})} = \frac{P_\rvm(m^*) P(\rvs^{t},\rvu^{t-1}|m^*)}{\sum_{m\in \set{M}} P_\rvm(m) P(\rvs^{t},\rvu^{t-1}|m)} \\
	& = \frac{P_\rvm(m^*)}{P_\rvm(m^*) + \sum_{m\neq m^*} P_\rvm(m) R_t(m)}.
\end{align*}
We next establish a lower bound on $\rho_t(m^*)$ by analyzing $R_t(m)$. By definition,
\begin{align*}
	R_{t+1}(m) = R_t(m) \frac{T(\rvs_{t+1}|\rvs_t,\rva_t(m))}{T(\rvs_{t+1}|\rvs_t,\rva_t(m^*))}.
\end{align*}
Then we can verify that $R_t(m)$ is a nonnegative martingale. To see this, let $\set{G}_t=\sigma(\rvs^t,\rvu^{t-1})$ denote the sigma-algebra generated by $(\rvs^t,\rvu^{t-1})$. Then
\begin{align*}
	\mathbb{E}[R_{t+1}(m)|\set{G}_t] & = R_t(m) \sum_{s\in \set{S}} T(\rvs_{t+1}=s|\rvs_t,\rva_t(m^*)) \frac{T(\rvs_{t+1}=s|\rvs_t,\rva_t(m))}{T(\rvs_{t+1}=s|\rvs_t,\rva_t(m^*))} \\
	& = R_t(m).
\end{align*}
Applying the martingale convergence theorem yields
\begin{align*}
	R_t(m) \to R_\infty(m) < \infty \qquad \text{a.s.}
\end{align*}
Therefore, there exists constants $C_m >0$ and $N_m>1$ such that
\begin{align} \label{eq: alpha-tm}
	\alpha_t(m) \ge C_m, \quad \forall t\ge N_m.
\end{align}

At the state $s^*$ specified in Assumption~\ref{as: KL}, define
\begin{align*}
	f(u) = F^{-1}_{\rva|s^*}(u) 
\end{align*}
Recall that
\begin{align*}
\rva_t(m) = f(\rvu_t + w'_t(m)) \mod 1	
\end{align*}
Since $\rvu_t\sim \text{Unif}[0,1]$, we can shift coordinates and define
\begin{align*}
	q(\alpha) \triangleq P_\rvu(f(\rvu)\neq f(\rvu + \alpha)).
\end{align*}
Here, $q(\alpha)$ is the probability that $\rva_t(m^*) \neq \rva_t(m) $ given that the circular distance between the two message points is $\alpha$. Since $\pi(\cdot|s^*)$ is non-deterministic, we have $q(\alpha)>0$ for every $\alpha\in (0,1)$. Moreover, $q(\alpha)$ is continuous in $\alpha$ and 
\begin{align*}
	\alpha_t(m)\in [C_m, 0.5],
\end{align*}
compactness gives
\begin{align*}
	q_m \triangleq \inf_{\alpha \in [C_m, 0.5]} q(\alpha) > 0.
\end{align*}
Therefore,
\begin{align} \label{eq:Pa-neq}
	P(\rva_t(m^*) \neq \rva_t(m) |\rvs_t = s^*) \ge q_m >0
\end{align}
for sufficiently large $t$. Define the state visitation frequency under the coding scheme as
\begin{align*}
	\theta_n(s) = \frac{1}{n} \sum_{t=1}^{n} P(\rvs_t = s).
\end{align*}
Then Assumption~\ref{as: recurrence} guarantees that
\begin{align} \label{eq:theta}
	\theta(s^*)\triangleq \lim_{n\to \infty} \theta_n(s^*) > 0.
\end{align}

Now,
\begin{align*}
	\mathbb{E}[K_t(m)|\set{G}_t] = \mathbf{1}\{\rvs_t = s^*\} q(\alpha_t(m))
\end{align*}
For sufficiently large $t$, we have $\mathbb{E}[K_t(m)|\set{G}_t] \ge q_m \mathbf{1}\{\rvs_t = s^*\} $. Now define the martingale difference
\begin{align*}
	Z_t(m) = K_t(m) - \mathbb{E}[K_t(m)|\set{G}_t].
\end{align*}
Since $K_t(m)\in \{0,1\}$, $\mathbb{E}[K_t(m)|\set{G}_t]\in [0,1]$. We thus have $|Z_t(m)|<1$. The martingale strong law~\citep{hall1980martingale} gives
\begin{align*}
	\frac{1}{n} \sum_{t=1}^{n} Z_t(m) \to 0 \quad \text{a.s.}
\end{align*}
This implies that
\begin{align*}
	\liminf_{n\to \infty} \frac{1}{n} K_n(m) = \liminf_{n\to \infty} \frac{1}{n} \mathbb{E}[K_n(m)] \ge q_m \theta(s^*)>0.
\end{align*}
This completes the proof.

\end{proof}

Now we are ready to establish Theorem~\ref{thm:reliability}.

\begin{proof} [Proof of Theorem~\ref{thm:reliability}]
	Fix an arbitrary message $m\neq m^*$ and define the log-posterior ratio
	\begin{align*}
		L_t(m) = \log \frac{\rho_t(m^*)}{\rho_t(m)}, \ t\ge 1.
	\end{align*}
	By definition, $\rho_1(m) = P_\rvm(m)>0$ for every $m\in \set{M}$; hence, $L_1(m)<\infty$ for every $m$. 
	Because $L_t(m)$ is determined by $(\rvu^{t-1},\rvs^t)$, it can be viewed as a random variable. Define the stopping  time
	\begin{align*}
		\tau_m \triangleq \inf\{t: \rho_t(m) = 0\}.
	\end{align*}
	To establish the desired result, it suffices to show that, almost surely, either (i) $\tau_m<\infty$; or (ii) $L_t(m) \to \infty$ as $t\to \infty$. Indeed, the posterior update rule in (\ref{eq:rho-update}) implies that once $\rho_i(m)=0$, it remains zero for all $t\geq i$. Thus, under the convention $\log(x/0)=\infty$, case (i) also yields $L_t(m)\to\infty$.
	
	Because $\rho_1(m)>0$, case (i) can occur only under the following condition:
	\begin{itemize}
		\item[\textbf{C1}.] There exists a tuple $(s,a,s')\in\set{S}\times \set{A}\times \set{S}$ such that $T(s'\mid s,a)=0$.
	\end{itemize}
	We consider the two settings separately: first, we analyze case (i) when Condition~\textbf{C1} holds; we then establish case (ii) when Condition~\textbf{C1} does not hold.
	
	Suppose that Condition~\textbf{C1} holds. Because $\rva_t(m)$ is generated using the uniform common randomness $\rvu_t$, see Figure~\ref{fig:algo} for an illustration, it can take any action in the support of $\pi(\cdot|\rvs_t)$ with positive probability. Consequently, when the process visits state $s$, there is a positive probability that $\rva_t(m)=a$ while the transition generated under the true message $m^*$ yields $\rvs_{t+1}=s'$. Since $T(s'| s,a)=0$, the observed transition is impossible under the candidate message $m$, and the posterior update therefore gives $\rho_{t+1}(m)=0$. Under the assumption that the Markov chain induced by the reference policy $\pi$ is recurrent, this elimination event occurs within a finite time almost surely, implying that $\tau_m<\infty$ almost surely.
	
	We next show that case (ii) is true when Condition~\textbf{C1}  does not hold. Define
	\begin{align*}
		\beta_t(m) = \log \frac{T(\rvs_{t+1}|\rva_t(m^*), \rvs_t)}{T(\rvs_{t+1}|\rva_t(m), \rvs_t)}, \ m\in \set{M}.
	\end{align*}
	Then we have
	\begin{align*}
		L_{t+1}(m) = L_t(m) + \beta_t(m).
	\end{align*}
	As a result,
	\begin{align*}
		L_n(m) = L_1(m) + \sum_{t=1}^{n-1} \beta_t(m)
	\end{align*}
	Let $\set{G}_t:=\sigma(\rvs^{t+1},\rvu^t)$ denote the sigma-algebra generated by $\rvs^{t+1}$ and $\rvu^t$. Define the conditional mean
	\begin{align*}
		d_t(m) = \mathbb{E}[\beta_t|\set{G}_t].
	\end{align*}
	Since $\rvs_{t+1}\sim T(\cdot|\rvs_t, \rva_t(m^*))$,
	\begin{align*}
		d_t(m) &= \sum_{s}T(\rvs_{t+1} = s|\rvs_t, \rva_t(m^*)) \log \frac{T(\rvs_{t+1}=s|\rva_t(m^*), \rvs_t)}{T(\rvs_{t+1}=s|\rva_t(m), \rvs_t)} \\
		& = D_{\text{KL}}\left(T(\cdot|\rvs_t, \rva_t(m^*)) || T(\cdot|\rvs_t, \rva_t(m))  \right)\\
		& \ge 0.
	\end{align*}
	Now, let
	\begin{align*}
		\varphi_t(m) = \beta_t(m) - d_t(m). 
	\end{align*} 
	Then by construction, 
	\begin{align*}
		\mathbb{E}[\varphi_t(m)|\set{G}_t] = 0.
	\end{align*}
	Then
	\begin{align*}
		\phi_i(m) = \sum_{t=1}^{i-1} \varphi_t(m).
	\end{align*}
	is a martingale with respect to the natural filtration. Specifically, $\phi_1(m)=0$. We can decompose $L_n(m)$ as follows:
	\begin{align} \label{eq:Lnm}
		L_n(m) = L_1(m) + \sum_{t=1}^{n-1} d_t(m) + \phi_n(m).
	\end{align}
	When Condition~\textbf{C1}  does not hold, we have
	\begin{align*}
		|\beta_t(m)|\le C
	\end{align*}
	for some positive constant $C$. It follows that
	\begin{align*}
		0\le d_t(m) = \mathbb{E}[\beta_t|\set{G}_t] \le C.
	\end{align*}
	Applying the triangle inequality yields
	\begin{align*}
		|\varphi_t(m)| = |\beta_t(m) - d_t(m)| \le 2C.
	\end{align*}
	Hence the martingale $\phi_n(m)$ has bounded increments. The Azuma-Hoeffding inequality then gives that, for any $\epsilon>0$,
	\begin{align*}
		P\left(|\phi_n(m) - \phi_1(m)|\ge \epsilon n\right) \le 2 \exp\left(-\frac{\epsilon^2 n}{8C^2}\right).
	\end{align*}
	Since $\phi_1(m)=0$ and the RHS tends to zero as $n\to\infty$, we obtain
	\begin{align*}
		\sum_{n=1}^{\infty}  P\left(|\phi_n(m) |\ge \epsilon n\right) < \infty.
	\end{align*}
	Then, by Borel-Cantelli lemma, we know that the event
	\begin{align*}
		|\phi_n(m) |\ge \epsilon n
	\end{align*}
	occurs only finitely often almost surely. Since this holds for any $\epsilon>0$, we have
	\begin{align*}
		\lim_{n\to \infty} \frac{|\phi_n(m)|}{n} =0, 
	\end{align*}
	almost surely.
	
	We next focus on the second term of equation (\ref{eq:Lnm}). By definition,
	\begin{align*}
		\sum_{t=1}^{n-1} d_t(m) & = \sum_{t=1}^{n-1}  D_{\text{KL}}\left(T(\cdot|\rvs_t, \rva_t(m^*)) || T(\cdot|\rvs_t, \rva_t{m})  \right) \\
		& \ge \sum_{t=1}^{n-1} \mathbf{1}\{\rvs_t = s^*, \rva_t(m^*)\neq \rva_t(m) \} D_{\text{KL}}\left(T(\cdot|\rvs_t, \rva_t(m^*)) || T(\cdot|\rvs_t, \rva_t{m})  \right)\\
		& \ge d_{\text{min}}\sum_{t=1}^{n-1} \mathbf{1}\{\rvs_t = s^*, \rva_t(m^*)\neq \rva_t(m) \}, 
	\end{align*}	
	where 
	\begin{align*}
		d_{\text{min}} = \min_{a,a':a\neq a'} D_{\text{KL}}\left(T(\cdot|s^*, a) || T(\cdot|s^*, a')\right).
	\end{align*}
	
	Combining the above arguments yields
	\begin{align*}
		\lim_{n\to \infty} \frac{1}{n}L_n(m) & = \lim_{n\to \infty} \left[\frac{L_1(m)}{n} + \frac{1}{n}\sum_{t=1}^{n-1} d_t(m) + \frac{\phi_n(m)}{n} \right] \\
		& = \lim_{n\to \infty} \frac{1}{n}\sum_{t=1}^{n-1} d_t(m) \\
		& \ge \delta_m >0.
	\end{align*}
	The last line follows from Lemma~\ref{lem: Ktm}.
	As a result, almost surely,
	\begin{align*}
		L_n(m)  =  \log \frac{\rho_n(m^*)}{\rho_n(m)} \to \infty \quad \text{as } n\to \infty.
	\end{align*}
	Since the above statement holds for all $m\neq m^*$, we conclude that 
	\begin{align*}
		\sum_{m\neq m^*} \rho_n(m) \xrightarrow{\mathrm{a.s.}} 0, \rho_n(m^*) \xrightarrow{\mathrm{a.s.}}  1, \quad \text{as }n\to\infty.
	\end{align*}
	This completes the proof.
\end{proof}

\section{Experiment Details} \label{app:exp}

\subsection{Experimental environments} \label{app: envs}
This section provides brief descriptions of the experimental environments used to evaluate the proposed PRAM method. Detailed Python implementations of these environments are included in the supplementary material and will be open-sourced after the review process.

\paragraph{ReFrozenLake.} The Revised FrozenLake (ReFrozenLake) is a variant of FrozenLake in Gymnasium, modified to provide a continuing finite-state MDP suitable for long-horizon communication experiments. The environment retains the original grid layout and slippery transition dynamics, but removes terminal states before reaching the goal. In particular, reaching a hole no longer terminates the episode; instead, the agent incurs a penalty and is randomly transitioned to another valid position in the grid. Reaching the goal yields a positive reward and terminates the trajectory. 

\paragraph{ConTaxi.} The Continuing Taxi (ConTaxi) is a variant of Taxi-v4 in Gymnasium that transforms the original episodic task into a continuing environment. It preserves the original state and action spaces, including the taxi’s location, passenger location, destination, and six available actions. When the taxi successfully drops off a passenger, the episode does not terminate; instead, the taxi remains at the drop-off location and a new passenger origin and destination are sampled uniformly from all valid pairs. The environment therefore supports an uninterrupted sequence of trips while retaining Taxi-v4’s reward structure and optional rainy-mode dynamics.

\paragraph{SBM.}
Stochastic Battery Management (SBM) is a customized finite-state, finite-action energy-management environment designed to model stochastic battery charging and discharging. The state is $s=(b,d,p)$, where $b\in\{0,\ldots,19\}$ denotes the battery level, $d\in\{-4,\ldots,5\}$ denotes the net electricity demand, and $p\in\{1,\ldots,5\}$ denotes the electricity-price level, resulting in $1000$ states in total. The action space is $\mathcal{A}=\{-2,-1,0,1,2\}$, corresponding to different charging and discharging rates.

The commanded battery action is executed with stochastic perturbation: with probability $0.8$ it is executed as intended, while with probability $0.1$ it is shifted by $-1$ or $+1$, respectively, subject to the action and battery-capacity constraints. The net demand and electricity price independently evolve according to nearest-neighbor Markov chains. The overall transition kernel therefore factorizes into the battery, demand, and price transition probabilities.

The reward reflects the electricity purchasing cost, revenue from selling excess energy, and battery degradation cost. Let $x=b'-b$ denote the realized battery change and $g=d+x$ the resulting grid electricity exchange. The instantaneous reward is
\[
r=-p[g]^+ + \beta p[-g]^+ - \lambda |x|,
\]
where $[z]^+=\max\{z,0\}$, $\beta=0.5$ is the selling-price ratio, and $\lambda=0.1$ controls the battery degradation penalty. The environment is continuing and contains no terminal states.

\subsection{Additional experimental results} \label{app:add-exp}

In DPIC, the per-message distribution-preserving requirement is motivated by the need to maintain control performance while embedding watermarks or other information into the agent's behavior. Table~\ref{tab:tv_distance} has shown that the empirical action distributions induced by PRAM closely match the corresponding reference policies. Here, we further verify that PRAM indeed preserves the control reward. Specifically, in each of the three environments, we select two or three reference policies and evaluate their average rewards, shown as dashed lines in the figure. For each reference policy, we then evaluate the reward achieved by PRAM when transmitting different messages. As shown in the figure, the rewards under PRAM are nearly identical across messages and closely match those of the corresponding reference policies.

\begin{figure}[t]
	\vspace{-5pt}
	\begin{subfigure}{0.33\linewidth}
		\centering
		\includegraphics[width=0.95\linewidth]{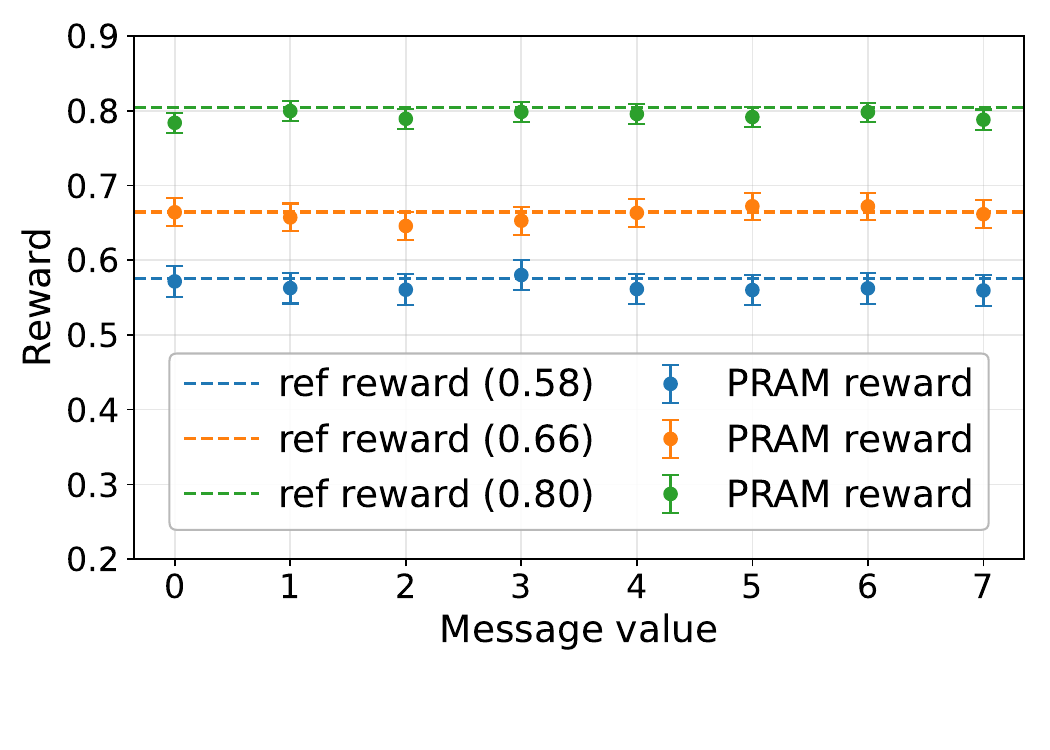}
		%\caption{\small{Trade-off between BER vs Reward over different rates.}}
		\caption{ReFrozenLake}
		\label{fig:reward-RFL}
	\end{subfigure} 
	\begin{subfigure}{0.33\linewidth}
		\centering
		\includegraphics[width=0.95\linewidth]{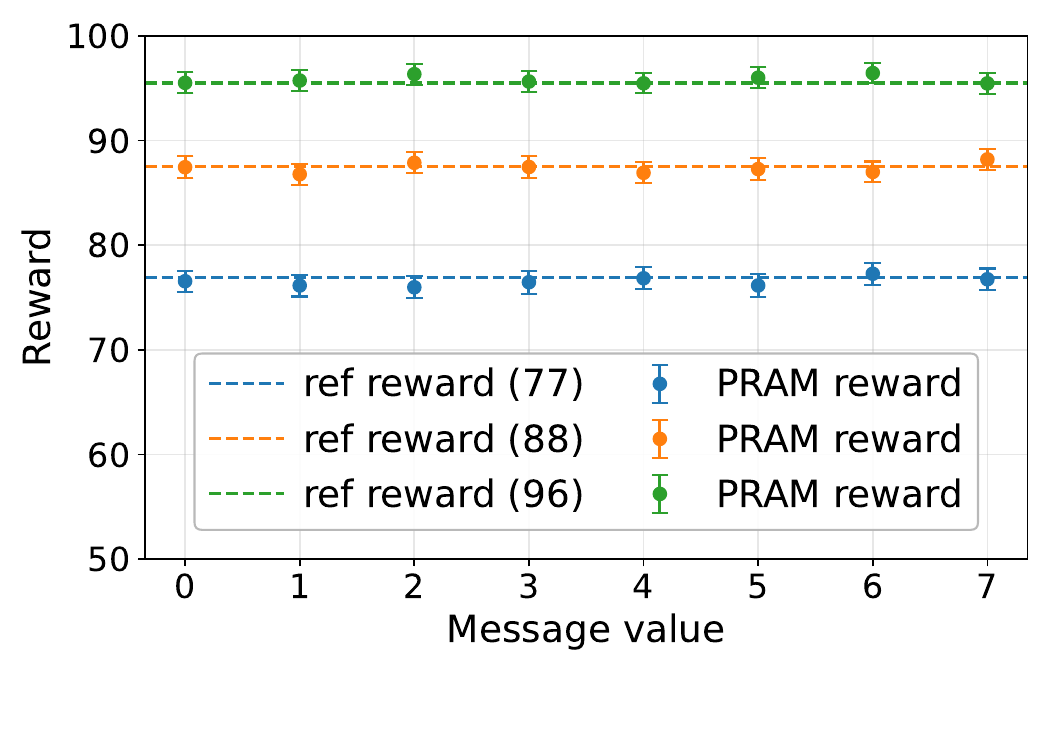}
		%\caption{\small{BER vs rate for a given reward}}
		\caption{ConTaxi}
		\label{fig:reward-Taxi}
	\end{subfigure} 
	\begin{subfigure}{0.33\linewidth}
		\centering
		\includegraphics[width=0.95\linewidth]{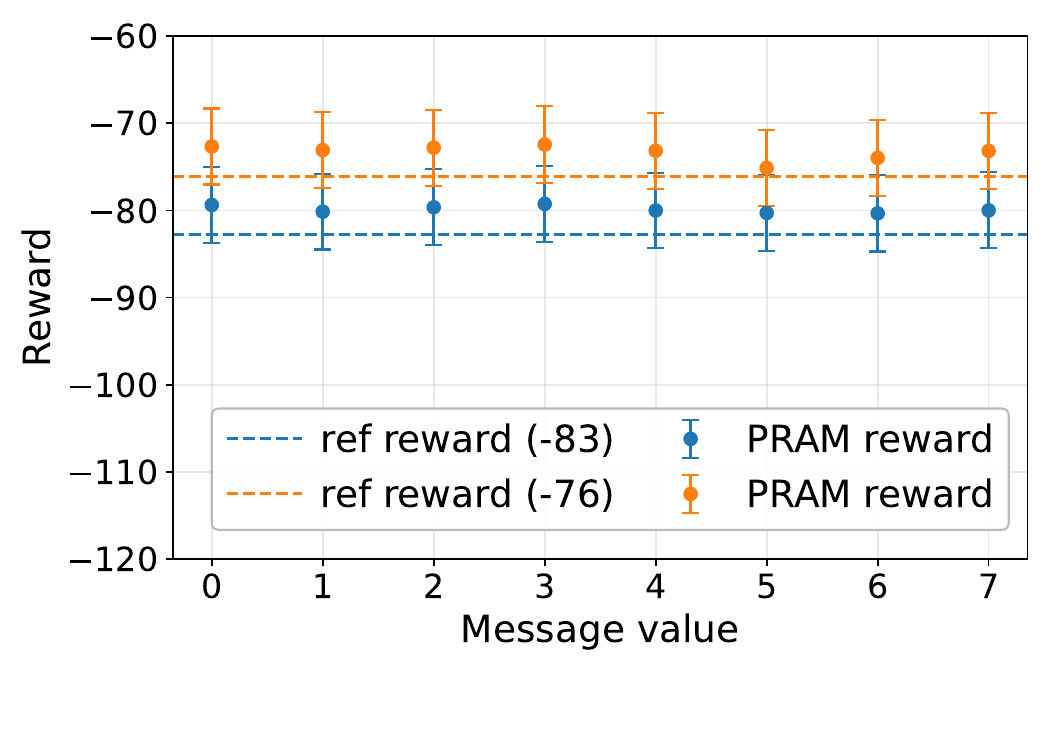}
		%\caption{\small{Reward vs rate for a given BER}}
		\caption{SBM}
		\label{fig:reward-SBM}
	\end{subfigure} 
	\caption{Control reward of transmitting different messages using PRAM. }
	\label{fig:reward}
	\vspace{-5pt}
\end{figure}

\section{Algorithm Pseudocode}  \label{app:algo}
The encoding procedure of PRAM at the agent is summarized in Algorithm~\ref{alg:1}.
\begin{algorithm}[H]
\caption{Posterior-matching randomness modulation (PRAM): Encoder}
\label{alg:1}
\begin{algorithmic}[1]

\REQUIRE Transition kernel $T$, blocklength $n$, message $m^*\in \set{M}$, message prior distribution $P_\rvm$

\STATE \textbf{Initialization:}
Let $\rho_1 = P_\rvm$. Observe the initial state $s_1$

\FOR{$t = 1$ to $n$}

    \STATE Compute for every $m$:
    \[
    b_t(m) \triangleq P(\rvm<m|\rvs^t,\rvu^{t-1}) = \sum_{j<m} \rho_t(j)
    \]

    \STATE Draw a sample of the common random variable $\rvu_t\sim\mathrm{Unif}[0,1]$

    \STATE Determine the information point:
    \[
    \rvw_t = \left[b_t(m^*) + l_t(m^*)\rho_t(m^*) + \rvu_t\right] \mod 1,
    \]

    \STATE Compute the action
    \[
        a_t = F^{-1}_{\rva_t|\rvs_t}(\rvw_t ) := \inf \{a\in \set{A}: F_{\rva_t|\rvs_t}(a)\ge \rvw_t  \}
    \]

    \STATE Apply action $a_t$ and observe the next state $s_{t+1}$. Then update the posterior
    \[
    \rho_{t+1}(m) = \frac{\rho_t(m) T(s_{t+1}|s_t,a_t(m))}{\sum_j \rho_t(j) T(s_{t+1}|s_t,a_t(j))}, \ \forall m\in \set{M}
    \]

\ENDFOR

\end{algorithmic}
\end{algorithm}

\end{document}